\newtheorem{theorem}{Theorem}
\newtheorem{corollary}[theorem]{Corollary}
\newcommand{\ph}[1]{{\color{magenta}#1}}
\newcommand{\ga}[1]{{\color{blue}#1}}
\begin{document}
\title{Indefinite Time Directed Quantum Metrology} 

\author{Gaurang Agrawal}
\affiliation{Indian Institute of Science Education and Research, Homi Bhabha Rd, Pashan, Pune 411 008, India}
\affiliation{Harish-Chandra Research Institute, A CI of Homi Bhabha National Institute,  Chhatnag Road, Jhunsi, Allahabad - 211019, India}
\orcid{0009-0006-2072-4542}
\author{Pritam Halder}
\affiliation{Harish-Chandra Research Institute, A CI of Homi Bhabha National Institute,  Chhatnag Road, Jhunsi, Allahabad - 211019, India}
\orcid{0000-0002-2409-3630}
\author{Aditi Sen(De)}
\affiliation{Harish-Chandra Research Institute, A CI of Homi Bhabha National Institute,  Chhatnag Road, Jhunsi, Allahabad - 211019, India}
\orcid{0000-0003-1693-0440}

\maketitle

\begin{abstract}
   We explore the performance of the metrology scheme by employing a quantum time flip during encoding, a specific case of processes with indefinite time direction, which we refer to as indefinite time directed metrology (ITDM). 
   In the case of single parameter estimation of a unitary, we demonstrate that our protocol can achieve Heisenberg scaling $(1/N)$ with product probe states, surpassing the standard quantum limit \((1/\sqrt{N})\),  where $N$ is the number of particles in the probe. We establish this by computing the quantum Fisher information (QFI) which is a lower bound on the root mean square error occurred during parameter estimation.
   Although we analytically prove the optimality of the symmetric product probe state in ITDM, entangled probe states produce a higher QFI than optimal product probes without enhancing scaling, highlighting the non-essentiality of entanglement. For phase estimation, we propose a single-qubit measurement on the control qubit that accomplishes near-optimal Fisher information and eventually reaches Heisenberg scaling.  Our findings reveal the best orientation of product probe states in every pertinent situation, emphasizing its independence from the parameter to be estimated \textcolor{black}{in the limiting case.} 
  Furthermore, we illustrate the benefits of ITDM in noisy metrology, outperforming existing techniques in some situations.

\end{abstract}

\section{Introduction}

The goal of quantum metrology \cite{Giovannetti2006Jan,paris2009quantum,Giovannetti2011Apr,Toth2014Oct,Degen2017Jul, Pezze2018Sep,Pirandola2018Dec} 
is to make extremely accurate estimates of a system's properties which is crucial in diverse scenarios such as magnetometry \cite{Jones2009Apr,Wasilewski2010Mar} 
 quantum imaging \cite{Perez-Delgado2012Sep, Genovese2016Jun, Moreau2019Jun}, atomic clock \cite{Appel2009Jul,Katori2011Apr,Ludlow2015Jun,Nichol2022Sep}, biological sensing
\cite{Taylor2016Feb, Mauranyapin2017Aug}, dark matter search \cite{Malnou2019May,Bass2024May}, gravitational wave detection \cite{Caves1981Apr,Schnabel2010Nov, BibEntry2011Dec}. 
In particular, the metrological procedure involves measuring a quantum state that encodes the parameter to be estimated, and its accuracy is typically quantified by the root mean square error \cite{Giovannetti2011Apr}.   Under plausible assumptions, the quantity is constrained below by Fisher information (FI), also known as the Cramer-Rao bound \cite{Cramr1946}, while optimizing overall measurement procedures provides quantum Fisher information (QFI), which specifies the ultimate precision limit \cite{Braunstein1994May}. 
It was demonstrated that, while uncorrelated probes yield QFI that scales only linearly with the number of subsystems, achieving the standard quantum limit (SQL), nonclassical resources such as entanglement or spin-squeezing can improve sensitivity by exceeding linear growth to quadratic or beyond, referred to as the Heisenberg limit (HL) or super-Heisenberg precision \cite{Wineland1992Dec, Leibfried2004Jun, Pezze2009Mar, Giovannetti2011Apr, Toth2014Oct, Augusiak2016Jul}. 

In conventional metrological protocols, the encoding operation is performed within a  framework that adheres to a well-defined causal order and a fixed temporal direction.  However, recent studies \cite{zhao2020quantum, liu2023optimal, chapeau2021noisy, Frey2019Apr, Ban2023Apr, mothe2024reassessing, an2024noisy, Yin2023Aug, Liu2024Dec} have explored metrological enhancements of  HL and beyond with the aid of indefinite causal order (ICO) \cite{Oreshkov2012Oct, Araujo2015Oct, Procopio2015Aug, Rubino2017Mar, Goswami2018Aug, Stromberg2023Aug, Rozema2024Aug} in which 
the quantum switch \cite{chiribella2013quantum} serves as an example of a quantum-controlled superposition of several causally ordered transformations \cite{Wechs2021Aug}.
Importantly, 
such protocols have already been implemented in laboratories \cite{Yin2023Aug,an2024noisy}. Further, beyond enhancing sensitivity, ICO has exhibited benefits across various domains including state and channel discrimination \cite{Chiribella2012Oct,Bavaresco2021Nov,Bavaresco2022Apr,Kechrimparis2024Dec}, communication capacity and complexity \cite{ebler2018enhanced,Wei2019Mar,Salek2018Sep,wu2024general,Kristjansson2020Jul,Wilson2020Mar,Loizeau2020Jan,Guo2020Jan,Goswami2020Aug,Chiribella2021Nov,Sazim2021Jun}, along with thermodynamical tasks including charging, refrigeration, work extraction and thermal state activation  \cite{Felce2020Aug,Guha2020Sep,Simonov2022Aug,Simonov2022Mar,Francica2022,Nie2022Sep,Zhu2023Dec}. 

Similar to the concept of ICO, 
a new broader class of higher-order operations \cite{Bisio2019} has recently been discovered, termed quantum operations with indefinite time direction or indefinite input-output direction \cite{chiribella2022quantum}. These operations lack a fixed temporal direction and involve a coherent superposition of forward and backward time directions. In the backward time direction, the roles of input and output port are interchanged, defining  quantum processes which receive the input in the future and output in the past timeline. Indefinite time direction can be realized in bidirectional quantum devices \cite{chiribella2022quantum, Liu2023Apr} which include crystals that rotate the polarization of single photons such as half-wave plates, and quarter-wave plates.  Analogous to the quantum switch in the context of ICO, the quantum ``time flip'' (TF) \cite{chiribella2022quantum, Liu2023Apr}, a paradigmatic example of indefinite time directed processes, has shown promising advantages in certain operator discrimination tasks \cite{chiribella2022quantum}, communication capacity \cite{Liu2023Apr}, emergence of dynamical memory effect in memoryless phase-covariant processes \cite{karpat2024memory}, and also answering foundational queries such as the thermodynamic “arrow of time” and work-extraction \cite{Rubino2021Nov, Rubino2022Mar}. 
Furthermore, its feasibility has been established through experimental implementations \cite{stromberg2024experimental, guo2024experimental}. 
Despite these advancements, the potential applications of the quantum TF in quantum technologies remain largely unexplored (although see a recent work which uses   quantum time flip \cite{Xia2024Jul} in continuous variable systems for estimating optical phase and light-beam rotation  with  photon number and orbital angular momentum  being resources respectively). 

 

In this paper, we bridge the gap by introducing a novel framework for quantum metrology 
that consists of encoding operations with indefinite input-output direction. By exploiting the properties of quantum TF, we devise a protocol that enables Heisenberg-limited precision in the estimation of an arbitrary single parameter of unitary operator, which we refer to as indefinite time directed metrology (ITDM). We prove that 
initializing the probe as pure product states in which all subsystems align in a specific direction, known as a symmetric product probe state, yields the maximum QFI  using a time-flipped encoding approach.
Further, we demonstrate that entanglement can yield a higher QFI  than the product probes, but it cannot offer a scaling benefit beyond the HL in this encoding procedure.
%
We also extract requirements for encoded state, in which ITDM cannot outperform SQL -- we call the no-advantage condition (NAC).


We  demonstrate the benefit of TF encoding for estimating phase-shift and axis  of a unitary. Specifically, in the case of phase estimation, we devise a measurement strategy that operates exclusively on the control qubit yielding an FI that exhibits Heisenberg scaling. Interestingly, we exhibit that the parameter dependency of the optimal input probe state and FI can be eliminated when the parameter to be evaluated is small. Furthermore, 
when noise influences the encoding strategy and input probes, our protocol showcases the advantage of the time-flipped strategy over the existing quantum ``switched'' strategy \cite{chapeau2021noisy} and regular strategy within specific parameter values and noise regimes.  We also briefly analyze the effect of noise in multi qubit noisy ITDM and present a possible way to get a higher FI than the standard method, emphasizing the need for further investigation in this area.


The  paper is organized as follows. In Sec.~\ref{sec:ITDM_framework}, we introduce the strategy based on time-flip encoding and derive the Heisenberg scaling for the general case. We present the optimality of symmetric product probe states among all product states and NAC condition with a measure of advantage in Sec.~\ref{sec:product_probe_itdm}. In Sec.~\ref{sec:phase_estimation_itdm}, we provide the applications of our procedure to phase estimation. 
We evaluate the use of entangled probes in Sec.~\ref{sec:entanglement} and demonstrate that entangled states cannot provide a higher QFI scaling than the product probes in this set-up. In Sec.~\ref{sec:noisy_adv}, we investigate the advantages of our scheme under noise and compare them with existing protocols. Finally, the results are summarized in Sec.~\ref{sec:conclusion}.

\section{Framework for Indefinite Time Directed Metrology}
\label{sec:ITDM_framework}

Quantum time flip is defined on the set of bidirectional quantum processes which can operate either in forward or backward time directions. A quantum channel \(\mathcal{C}\) is referred to as \textit{bidirectional} if \(\Omega(\mathcal{C})\) is a completely positive and trace-preserving (CPTP) map where \(\Omega\) denotes the \textit{input-output inversion map}. As demanded in Ref.~\cite{chiribella2022quantum}, this map $\Omega$ should have certain properties including ``order reversing'', ``identity preserving'', $\Omega(\mathcal{A}) \neq \Omega(\mathcal{B})$ when channels $\mathcal A, \mathcal B$  are unequal, and $\Omega\big(q \mathcal{A} + (1-q)\mathcal{B}\big) = q\Omega(\mathcal{A}) + (1-q)\Omega(\mathcal{B})$ with $\  1\geq  q \geq 0$. Moreover, the Kraus operators $\{C_i\}$ of $\mathcal{C}(*) = \sum_i C_i (*) C_i^\dagger$ satisfy bistochasticity conditions, i.e., $\sum_i C^\dagger_iC_i=\sum_iC_iC_i^\dagger=I$, with $I$ being the identity matrix on the Hilbert space of the system. The only two possibilities of $\Omega$ are $\Omega(\mathcal C)(*)=\mathcal C^\dagger(*)=\sum_iC_i^\dagger(*)C_i$ or $\Omega(\mathcal C)(*)=\mathcal C^T(*)=\sum_iC_i^T(*)C_i^{T^\dagger}$, up to unitary equivalence. $\mathcal C$ as the input of the quantum time flip supermap generates the  time flipped channel, denoted as $T_{\mathcal{C}}$. The operation of  $T_{\mathcal{C}}$ is described by  \(T_{\mathcal{C}}(\rho_c \otimes \rho) = \sum_{i} T_{i}^{\mathcal C}(\rho_c \otimes \rho) T_{i}^{\mathcal C\dagger}\) where
$T_{i}^{\mathcal{C}} = \ket{0}\bra{0} \otimes C_i + \ket{1}\bra{1} \otimes \Omega(C_i)$ are the Kraus operators and the first register functions as the control for the time direction. We perform encoding in parameter estimation protocol with the aid of indefinite time ordering, defined as ITDM.  

\begin{figure}[h]
    \centering
    \includegraphics[width=0.55\textwidth]{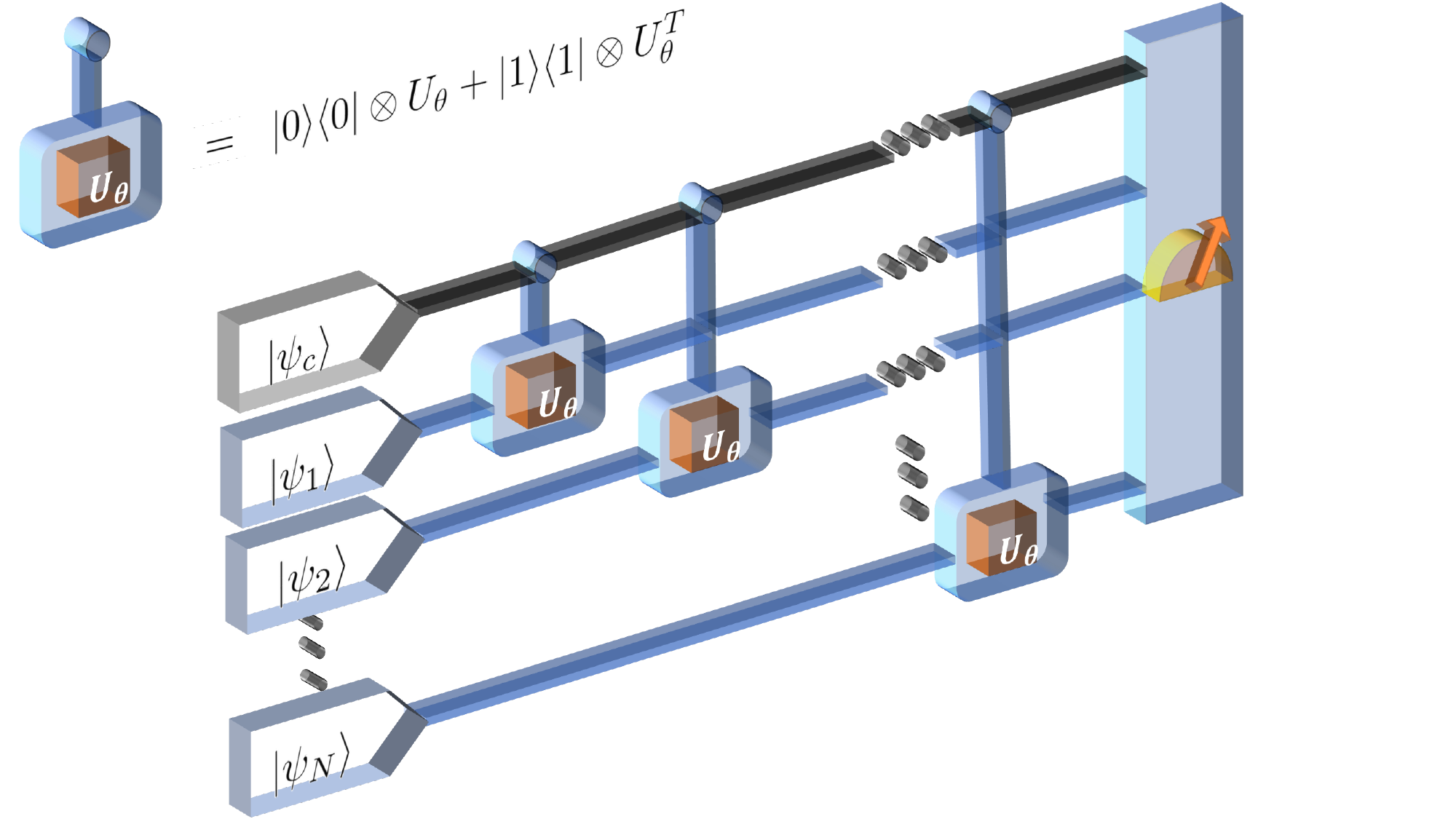}
    \caption{\textbf{ITDM scheme.} Taking $N$-qubit pure product probe states, $\{\ket{\psi_i}\}^N_{i=1}$ along with a control qubit, $\ket{\psi_c}$ we encode the parameter $\theta$ of $U_\theta$ on the probe state using time flipped gate, $T_{U_\theta}=|0\rangle\langle0| \otimes U_\theta + |1\rangle\langle1| \otimes U_\theta^{T}$. Therefore, the whole encoding procedure is performed in a single time step with $N$ queries of $T_{U_\theta}$, given by $\mathbb{T}_{U_\theta}=\prod_{i=1}^N (T^{i}_{U_\theta}\otimes\mathbb{I}_{N-1}^{\neq i})= \ketbra{0}{0}\otimes U_\theta^{\otimes N}+\ketbra{1}{1}\otimes {U_\theta^T}^{\otimes N}$ with $\mathbb{I}_{N-1}^{\neq i}=\bigotimes_{\substack{j=1\\j\neq i}}^N{I}_j$. This encoded state is then measured either locally on the control qubit or globally on the basis of symmetric logarithmic derivative  operator to obtain the information about $\theta$.}
    \label{fig:strat}
\end{figure}

\subsection{Time-flipped encoding strategy}

In quantum estimation theory, given a unitary gate $U$  which is a function of $\theta$, i.e., $U\equiv U_\theta$, our task is to estimate $\theta$.  Note that the parameter $\theta$ can be directly related to a physical quantity of interest, such as the strength of a magnetic field,  phase shift acquired by a photon's polarization state in a photonic setup.  Instead of conventional unitary encoding schemes, we employ TF-encoding strategy comprising of $N$ number of bipartite  gates, $T_{U_\theta}=|0\rangle\langle0| \otimes U_\theta + |1\rangle\langle1| \otimes U_\theta^{T}$, acquired by applying the TF-supermap on  $U_\theta$, on $N$-qubit input probe, $\ket{\psi^N}$, assisted by a control qubit, $\ket{\psi_c}= \sqrt{p_c} \ket{0} + e^{i \theta_c} \sqrt{1 - p_c} \ket{1}$  with $0\leq p_c\leq1$ and $0\leq\theta_c\leq 2\pi$ (see Fig.~\ref{fig:strat}).
Mathematically, the state after encoding operation  can be written as $\ket{\Phi} = \mathbb{T}_{U_\theta}\ket \Psi = \mathbb{T}_{U_\theta} \ket{\psi_c} \otimes\ket{\psi^N} $, where
\begin{IEEEeqnarray}{rCl}
   \nonumber \mathbb{T}_{U_\theta}&&=\prod_{i=1}^N (T^{i}_{U_\theta}\otimes\mathbb{I}_{N-1}^{\neq i}),\\
   &&= \prod_{i = 1}^N \bigg(|0\rangle\langle0| \otimes I^{\otimes i-1} \otimes U_\theta  \otimes I^{\otimes N-i} \nonumber \\
   &&\nonumber \qquad \quad + |1\rangle\langle1| \otimes I^{\otimes i-1} \otimes U_\theta^T  \otimes I^{\otimes N-i} \bigg),\\ &&=\ketbra{0}{0}\otimes U_\theta^{\otimes N}+\ketbra{1}{1}\otimes {U_\theta^T}^{\otimes N}, 
\label{eq:time_flip_gate}
\end{IEEEeqnarray}
with $\mathbb{I}_{N-1}^{\neq i}=\bigotimes_{\substack{j=1\\j\neq i}}^N{I}_j$. Specifically, when $\ket{\psi^N}$ is fully separable, i.e., $\ket{\psi^N} = \bigotimes_{i=1}^N\ket{\psi_i}$, the encoded state  can be written as
\begin{equation}
\ket{\Phi}
= \sqrt{p_c} \ket{0} \bigotimes_{i=1}^{N} \ket{\chi_i} + e^{i \theta_c} \sqrt{1 - p_c} \ket{1} \bigotimes_{i=1}^{N} \ket{\omega_i},
\label{eq:PureStateOutput}
\end{equation}
where $U_{\theta} \ket{\psi_i} = \ket{\chi_i},$ $\ U_{\theta}^{T} \ket{\psi_i} = \ket{\omega_i}$. Subsequently, we analyze the performance of the estimation protocol with product probe states and compare its performance with the entangled ones.

\subsection{Quantum Fisher information in ITDM}
\label{sec:product_probe_itdm}

 The goal here is to estimate the parameter $\theta$ involved in $U_\theta$ with a high precision which can be fulfilled by performing measurements on the encoded state. Importantly, for any kind of measurements, the precision in estimating $\theta$ with an unbiased estimator, quantified by the variance of $\theta$, is bounded below by the Cramer-Rao bound via  Fisher information, $\mathcal F_\theta$ as $(\Delta\theta)^2\geq (\lambda \mathcal F_\theta)^{-1}$, where $\lambda$ denotes the number of measurements available \cite{Braunstein1994May, paris2009quantum}. In particular, when the decoding measurement is performed in the basis of symmetric logarithmic derivative (SLD), Cramer-Rao bound is saturated and FI is termed as quantum Fisher information. The QFI for a pure state $\ket{\Xi}$ can be written as 
\begin{equation}
\mathcal{Q}_\theta= 4 \left( \langle \dot{\Xi} | \dot{\Xi} \rangle - \left|\langle{\dot{\Xi}}|\Xi \rangle \right|^2 \right),
\label{eqn: QuantumFisherInformationPureState1}
\end{equation}
where $\dot{\ket{\Xi}}=\diff{\ket{\Xi}}{\theta}$. For $N$ number of uncorrelated particles, $(\Delta \theta)^2$ is known to scale as $1/N$, referred to as the standard quantum limit (SQL) while with the help of quantum entanglement, it is possible to diminish the error as $(\Delta \theta)^2 \sim 1/{N^2}$ \cite{paris2009quantum,Toth2014Oct,Degen2017Jul,Giovannetti2011Apr}, known as the Heisenberg limit (HL). It will be interesting to find that when encoding is performed with TF operator, how quantum Fisher information scales with $N$. In our case of product probes in Eq.~\eqref{eq:PureStateOutput}, QFI takes the form as (see Appendix.~\ref{app:QFI_calc})
\begin{eqnarray}
    \nonumber\mathcal{Q} &=& 4 \bigg[ p_c \left( \sum_{k=1}^N |\dot{\chi}_k| + \sum_{\substack{j,k=1,\\j \neq k}}^N \alpha_k \alpha_j^{*} \right) \\ \nonumber &&+ (1 - p_c) \left( \sum_{k=1}^N |\dot{\omega_k}| + \sum_{\substack{j,k=1,\\j \neq k}}^N \beta_k \beta_j^{*} \right) \nonumber\\
    &&- \bigg| p_c \sum_{k=1}^N \alpha_k + (1 - p_c) \sum_{k=1}^N \beta_k \bigg|^2 \bigg], \label{eq:symmproveeqnQ0}
\end{eqnarray}
where
\begin{align}
 & \quad \langle \dot{\chi}_k | \chi_k \rangle = \alpha_{k}, \quad \langle \dot{\chi}_k | \dot{\chi}_k \rangle = | \dot{\chi}_k |, \notag\\
& \quad \langle \dot{\omega}_k | \omega_k \rangle = \beta_{k}, \quad \langle \dot{\omega_k} | \dot{\omega_k} \rangle = |\dot{\omega_k} |. \label{eqn:Singleparamsld_labels}
\end{align}
To analyze the scaling of QFI, let us concentrate on the simplest scenario where all the qubits in the probe state are equal, i.e., $\ket{\psi_i} = \ket{\psi}$ $\forall i$, leading to a $N$-qubit symmetric probe state while for the asymmetric one, $\ket{\psi_i} \neq \ket{\psi} ~ \forall i$ . Therefore, in the symmetric scenario, the QFI reduces to
\begin{IEEEeqnarray}{rCl}
&\mathcal{Q}^S& \nonumber \\&=& 4 N^2 \Big[ p_c |\alpha|^2 + (1 - p_c) |\beta|^2 - |p_c \alpha + (1 - p_c) \beta|^2 \Big] \notag \nonumber \\
&&+ 4 N \Big[ p_c \Big(| \dot{\chi} | - |\alpha|^2 \Big) + (1 - p_c) \Big( | \dot{\omega} | - |\beta|^2 \Big) \Big], \nonumber\\ 
&=& 4 N^2 p_c(1-p_c)|\alpha-\beta|^2\nonumber\\
&&+ 4 N \Big[ p_c \Big(| \dot{\chi} | - |\alpha|^2 \Big) + (1 - p_c) \Big( | \dot{\omega} | - |\beta|^2 \Big) \Big],
\label{eq:QFI_singleparam_Unitary_SLD0}
\end{IEEEeqnarray}
with $\alpha_k = \alpha, \ \beta_k = \beta, \ \ |\dot\chi_k| = |\dot\chi|,$ and $|\dot\omega_k| = |\dot\omega|$ $\forall k$. From the above expression, it is clear that $\mathcal{Q}^S \sim N^2$ as $N\to\infty$, which implies that by using symmetric product state $\ket{\Psi^S} = \ket{\psi_c}\otimes\ket{\psi}^{\otimes N}$ and encoding $\theta$ by the time-flipped operation which comprises multiple two-qubit controlled gates (see Fig.~\ref{fig:strat}), the precision limit of $\theta$ can go beyond SQL and can achieve HL. 

Let us now inquire whether we can beat SQL for all kinds of  parameterized unitaries. We find that this is not the case, specifically, ITDM can not surpass the standard quantum limit when $\alpha=\beta$. This condition, referred to as the ``no advantage condition'' (NAC), simplifies to $(\partial_\theta U_\theta)^\dagger U_\theta = (\partial_\theta U_\theta^T)^\dagger U_\theta^T$. If $U_\theta$ satisfies NAC, it implies that there do not exist any probe and control state for which  one can go beyond the SQL. Notably, the NAC is satisfied when $U^T=\pm U$, although, additional solutions for $U$ may also exist. 
 
Motivated by no-advantage condition, we propose a measure of advantage ($M_A$) in our protocol as the value of coefficient of $N^2$ in Eq.~\eqref{eq:QFI_singleparam_Unitary_SLD0}. It turns out to be $M_A=4p_c(1-p_c)\max_{\ket{\psi}}\big|\big\langle(\partial_\theta U_\theta)^\dagger U_\theta - (\partial_\theta U_\theta^T)^\dagger U_\theta^T\big\rangle\big|^2 = 4p_c(1-p_c)|\lambda_{\max}|^2$ where $\lambda_{\max}$ is the eigenvalue of $(\partial_\theta U_\theta)^\dagger U_\theta - (\partial_\theta U_\theta^T)^\dagger U_\theta^T$ with the biggest modulus. There is no advantage when either $p_c = 0 \text{ and }1,$ i.e., when the time flip does not work or when NAC is satisfied where $|\lambda_{\max}|^2 = 0$ thereby, resulting in a vanishing measure. On the other hand, when the measure is non-zero, we can, in principle, set our probe to the eigenvector corresponding to $\lambda_{\max}$ and gain QFI with an $N^24p_c(1-p_c)|\lambda_{\max}|^2$.

It is now natural to ask -- \textit{Is the symmetric product probe state optimal for ITDM}? In other words, if we can exhibit that one cannot achieve higher scaling with asymmetric product probes than the one obtained with symmetric states, the query above is resolved. Indeed, we answer this question positively in the following Theorem.

\begin{theorem}
    Among all pure product states, given an asymmetric state, there exists a symmetric state which provides a higher value of QFI in ITDM.  
    \label{th1}
\end{theorem}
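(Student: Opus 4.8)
The plan is to recast the general product-probe QFI of \eqref{eq:symmproveeqnQ0} into the same ``collective $+$ local'' shape as the symmetric expression \eqref{eq:QFI_singleparam_Unitary_SLD0}, and then dominate it term by term by the symmetric QFI built from one carefully chosen single-qubit component of the given asymmetric probe. Writing $\delta_k:=\alpha_k-\beta_k$ and $g_k:=p_c(|\dot{\chi}_k|-|\alpha_k|^2)+(1-p_c)(|\dot{\omega}_k|-|\beta_k|^2)$, the first step is purely algebraic: using $\sum_{j\neq k}\alpha_k\alpha_j^{*}=\bigl|\sum_k\alpha_k\bigr|^2-\sum_k|\alpha_k|^2$ (and likewise for $\beta$) together with the weighted-variance identity $p|A|^2+(1-p)|B|^2-|pA+(1-p)B|^2=p(1-p)|A-B|^2$, \eqref{eq:symmproveeqnQ0} collapses to $\mathcal{Q}=4p_c(1-p_c)\bigl|\sum_{k}\delta_k\bigr|^2+4\sum_k g_k$. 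Here each $g_k\ge 0$, being (a quarter of) the single-qubit QFI of the $k$-th component, i.e.\ the bracketed quantities are Cauchy--Schwarz gaps for normalized states.

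Next I would kill the collective term. The triangle inequality followed by Cauchy--Schwarz gives $\bigl|\sum_k\delta_k\bigr|^2\le\bigl(\sum_k|\delta_k|\bigr)^2\le N\sum_k|\delta_k|^2$, so $\mathcal{Q}\le 4\sum_{k=1}^N\bigl[N p_c(1-p_c)|\delta_k|^2+g_k\bigr]$. Now let $m$ be an index maximizing $N p_c(1-p_c)|\delta_k|^2+g_k$ over $k\in\{1,\dots,N\}$, and choose the symmetric probe $\ket{\psi_c}\otimes\ket{\psi_m}^{\otimes N}$ with the same control. Bounding the sum by $N$ times its largest summand yields $\mathcal{Q}\le 4N\bigl[N p_c(1-p_c)|\delta_m|^2+g_m\bigr]=4N^2 p_c(1-p_c)|\alpha_m-\beta_m|^2+4N g_m$, which is precisely the symmetric QFI $\mathcal{Q}^S$ of \eqref{eq:QFI_singleparam_Unitary_SLD0} evaluated on $\ket{\psi_m}$. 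Hence the symmetric state matches or beats the asymmetric one.

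For the strict inequality I would track where equality can hold: equality in the triangle/Cauchy--Schwarz step forces all $\delta_k$ to be collinear with common modulus, and equality in the final ``max'' step forces the numbers $N p_c(1-p_c)|\delta_k|^2+g_k$ to be identical for all $k$; an asymmetric probe failing any of these --- the generic situation, and in particular whenever the per-qubit data $(\delta_k,g_k)$ are not all equal --- is strictly outperformed. One may also take $\ket{\psi_m}$ to be the single-qubit state that globally maximizes $N p_c(1-p_c)|\delta(\psi)|^2+g(\psi)$; the same chain closes and the comparison becomes even more transparent.

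The only genuine computation is the first (decomposition) step, and it is routine. The delicate point, and the step I expect to be the real obstacle, is the strictness argument: one must rule out the degenerate possibility that two distinct $\ket{\psi_k}$ yield identical $(\delta_k,g_k)$ data that is \emph{already} optimal --- which can occur for highly symmetric $U_\theta$ (e.g.\ those obeying the NAC, for which $T_{U_\theta}$ acts trivially on the probe). I expect the clean deliverable to be ``$\mathcal{Q}^S\ge\mathcal{Q}$, with strict inequality unless the components are QFI-equivalent,'' which is exactly what the above chain gives.
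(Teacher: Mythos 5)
Your argument is correct and is essentially the paper's own proof in lightly different clothing: your intermediate quantity $4\sum_k\bigl[Np_c(1-p_c)|\delta_k|^2+g_k\bigr]$ is exactly the average $\mathcal{Q}_{avg}=\frac{1}{N}\sum_l\mathcal{Q}^S_l$ of the $N$ candidate symmetric QFIs, your Cauchy--Schwarz step $\bigl|\sum_k\delta_k\bigr|^2\le N\sum_k|\delta_k|^2$ is the paper's inequality $\mathcal{Q}_{avg}-\mathcal{Q}^A\ge 0$, and ``$N$ times the largest summand'' is the paper's ``max exceeds average.'' Your variance-identity decomposition of Eq.~\eqref{eq:symmproveeqnQ0} is a tidier route to the same inequality than the paper's term-by-term expansion in Appendix~B, and your strictness caveat is a fair point the paper also leaves unaddressed, but neither changes the substance.
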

\begin{proof}
    Given an asymmetric product probe state $\ket{\Psi}=\ket{\psi_c}\bigotimes_{i=1}^N\ket{\psi_i}$ with QFI $\mathcal Q^A$, we denote the set of $N$ symmetric product states as $\mathcal{S}_{\Psi}:=\{\ket{\Psi^S_l}\}_{l=1}^N$. Here, the state $\ket{\Psi^S_l}=\ket{\psi_c}\ket{\psi_l}^{\otimes N}$ in this set have QFI $\mathcal{Q}^S_l$. 
    Without loss of generality, we assume that the optimal probe state among all symmetric product states in $\mathcal{S}_\Psi$ is $\ket{\Psi^S_1}$. Therefore, mathematically, we can write $\mathcal{Q}^S_1\geq\mathcal{Q}^S_l \quad \forall l\in [N],$ implying $ \mathcal{Q}^S_1-\mathcal{Q}_{avg}\geq 0,$
    where $\mathcal{Q}_{avg}=\frac{1}{N}\sum_{l=1}^N\mathcal{Q}^S_l$. Let us now consider the quantity, $\mathcal{Q}_{avg}-\mathcal{Q}^A$. On the other hand, it can be shown that (see Appendix.~\ref{ap:th1}) 
    \begin{align}
        \nonumber &\mathcal{Q}_{avg}-\mathcal{Q}^A\\ \nonumber
        &=4 p_c (1 - p_c) \left(N\sum_{k =1}^N \left| \alpha_k -\beta_k \right|^2 -  \left| \sum_{k =1}^N \alpha_k -\beta_k\right|^2\right), \\&\geq 0.
        \label{eq:avg_anti}
    \end{align}
    Therefore, we get $\mathcal{Q}_1^S\geq\mathcal{Q}^A$. Hence, the theorem is proved.
\end{proof}
This result establishes the fact that in the space of $N$-qubit product states, there always exist a symmetric product probe state that maximizes the QFI. Consequently, from Theorem~\ref{th1}, we can conclude the following.
\begin{corollary}
    Maximal QFI in the case of ITDM with product probe states is achieved by symmetric states of the form $\ket{\Psi^S}=\ket{\psi_c}\ket{\psi}^{\otimes N}$ where we have to optimize over only the doubly-parametrized set of states, $\ket{\psi}=\sqrt{p_s}\ket{0} + e^{i\theta_s}\sqrt{1-p_s}\ket{1}$ with $\theta_s\in[0,2\pi]$ and $p_s\in[0,1]$ along with optimal control qubit and encoding axis.
    \label{cor1}
\end{corollary}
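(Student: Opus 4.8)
The plan is to treat Corollary~\ref{cor1} as the immediate packaging of Theorem~\ref{th1} together with an elementary count of the remaining free parameters, so the argument will be short. First I would use Theorem~\ref{th1} to collapse the optimization over all $N$-qubit pure product probes onto symmetric ones. Given any asymmetric product probe $\ket{\Psi}=\ket{\psi_c}\bigotimes_{i=1}^{N}\ket{\psi_i}$, Theorem~\ref{th1} produces a symmetric probe $\ket{\Psi^S_1}=\ket{\psi_c}\ket{\psi_1}^{\otimes N}$ --- built from one of $\ket{\Psi}$'s own single-qubit factors --- with $\mathcal{Q}^S_1\geq\mathcal{Q}^A$ (this is exactly the chain $\mathcal{Q}^S_1\geq\mathcal{Q}_{avg}\geq\mathcal{Q}^A$ underlying Eq.~\eqref{eq:avg_anti}), and for an already-symmetric probe the same conclusion is trivial. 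Hence the supremum of the QFI over all product probes equals its supremum over symmetric product probes, and any maximizer over product probes can be replaced, without lowering the QFI, by one of the form $\ket{\psi_c}\otimes\ket{\psi}^{\otimes N}$. Since the control qubit $\ket{\psi_c}$ is held fixed throughout Theorem~\ref{th1}, this reduction is valid for each $\ket{\psi_c}$ separately, hence also after the outer maximization over the control qubit.

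Next I would fix the parametrization. Up to an overall phase, every pure qubit state equals $\sqrt{p_s}\ket{0}+e^{i\theta_s}\sqrt{1-p_s}\ket{1}$ with $p_s\in[0,1]$ and $\theta_s\in[0,2\pi]$; this phase freedom is genuinely harmless here, since multiplying $\ket{\psi}$ by $e^{i\phi}$ multiplies $\ket{\psi}^{\otimes N}$ by $e^{iN\phi}$ and, because $\ket{\chi}=U_\theta\ket{\psi}$ and $\ket{\omega}=U_\theta^{T}\ket{\psi}$ acquire the same factor, multiplies both branches of $\ket{\Phi}$ in Eq.~\eqref{eq:PureStateOutput} by the common, $\theta$-independent phase $e^{iN\phi}$, which cancels out of the pure-state QFI. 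The same applies to the control state, whose two remaining real parameters are precisely the $p_c,\theta_c$ of the framework. Substituting $\ket{\Psi^S}=\ket{\psi_c}\otimes\ket{\psi}^{\otimes N}$ into the QFI reproduces Eq.~\eqref{eq:QFI_singleparam_Unitary_SLD0}, which depends on the probe only through $\alpha=\langle\dot{\chi}|\chi\rangle$, $\beta=\langle\dot{\omega}|\omega\rangle$, $|\dot{\chi}|$ and $|\dot{\omega}|$, all smooth functions of $(p_s,\theta_s)$ and of the encoding direction (incidentally $\theta_c$ also drops out of $\mathcal{Q}^S$, since the two time-direction branches are orthogonal in the control register, though this refinement is not needed for the statement). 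The optimization therefore reduces to the doubly-parametrized probe family together with the control qubit and the encoding axis, as claimed.

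Finally I would justify the word ``achieved'': $\mathcal{Q}^S$ is continuous in $(p_s,\theta_s,p_c,\theta_c)$ and in the axis parameters, and its domain --- a product of closed intervals with (at most) a two-sphere for the encoding axis --- is compact, so the supremum is attained, and by the first step it is attained at a symmetric product state. I do not anticipate a genuine obstacle: the analytic content lives entirely in Theorem~\ref{th1} and in the derivation of Eq.~\eqref{eq:QFI_singleparam_Unitary_SLD0}, while the corollary merely repackages them. The only point deserving a word of care is the meaning of ``encoding axis'': if $U_\theta$ is handed over with a fixed generator there is nothing to optimize there and the claim concerns the probe and control alone, whereas if the orientation of the generator is itself at our disposal --- as in the phase- and axis-estimation settings --- it enters the domain as an extra compact factor, leaving the attainment argument untouched.
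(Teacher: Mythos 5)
Your proposal is correct and matches the paper's route: the paper presents this corollary as an immediate consequence of Theorem~\ref{th1}, reducing the product-probe optimization to symmetric states and then to the standard two-real-parameter form of a pure qubit (with global phase irrelevant to the QFI), exactly as you argue. Your added remarks on phase invariance and compactness for attainment are sound refinements the paper leaves implicit.
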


\section{Estimating  phases with ITDM}
\label{sec:phase_estimation_itdm}
After establishing the advantage of TF operation in the context of quantum metrology, let us concentrate on an explicit phase estimation scheme and exhibit how one can achieve Heisenberg limit with TF operation. In particular, given a unitary $U_\theta=e^{-i\frac{\theta}{2}\vec\sigma.\hat n},$ our goal is to estimate the value of the  parameter $\theta$, where $\hat n = \hat xn_1 + \hat yn_2 + \hat zn_3$ is referred as the encoding axis. As realized in the previous section, by initializing the input state as $\ket{\Psi^S}=\ket{\psi_c}\ket{\psi}^{\otimes N},$  we find QFI in terms of
\begin{eqnarray}
    \nonumber\alpha &=& in_3(p_s - 1/2) \\\nonumber&&+ i\sqrt{p_s(1-p_s)}(n_1\cos{\theta_s}+n_2\sin{\theta_s}), \notag\\\nonumber
\beta &=& in_3(p_s - 1/2)  \\\nonumber &&+ i\sqrt{p_s(1-p_s)}(n_1\cos{\theta_s}-n_2\sin{\theta_s}), \notag\\
|\dot{\chi}| &=& 1/4, \,\text{and}\,\, |\dot{\omega}| = 1/4,
\end{eqnarray}
in Eq.~\eqref{eq:QFI_singleparam_Unitary_SLD0}. In particular, $\mathcal{Q}_\theta(A,B)=AN^2+BN,$ where
\begin{eqnarray}
    A=4p_c(1-p_c)*4p_s(1-p_s)n_2^2\sin^2{\theta_s},
    \label{eqn:phase_estimation_A}
\end{eqnarray}
and 
\begin{eqnarray}
   \nonumber B&=&4p_c\Bigg[\frac{1}{4} - \Big\{n_3(p_s - 1/2) \nonumber\\
&&+ \sqrt{p_s(1-p_s)}(n_1\cos{\theta_s}+n_2\sin{\theta_s})\Big\}^2\Bigg] \nonumber\\
&&+ \left. 4(1-p_c)\Bigg[\frac{1}{4} - \Big\{n_3(p_s - 1/2) \right.\nonumber\\ \nonumber
&&+ \sqrt{p_s(1-p_s)}(n_1\cos{\theta_s}-n_2\sin{\theta_s})\Big\}^2\Bigg],\\
\end{eqnarray}
is independent of the parameter $\theta$ itself which we want to estimate, and also on $\theta_c$. Notice that NAC, in this case reduces to  $(\vec\sigma.\hat n)^T=\vec\sigma.\hat n$ which implies $\hat n_2=0$. 

Let us now find the maximum value of QFI, $\bar{\mathcal{Q}}_\theta^{\hat n}$ for a given encoding direction $\hat n$,  which requires optimization over the control and the probe state. To address this, let us group the parameters $\theta_s,p_s,p_c$ in a set $\bm\mu$, i.e., $\bm\mu=\{\theta_s,p_s,p_c\}$. Mathematically, the optimization can be cast into the form  $\mathcal{Q}_\theta^{\hat n}=\max_{\bm\mu}\mathcal{Q}_\theta(A,B)$. Since optimization over multiple (three) parameters  is not straightforward, we maximize the coefficient $A$ to obtain $M_A$. Looking at the form of $A$ in Eq.~\eqref{eqn:phase_estimation_A}, we can easily find that the maximum value of  $A=n_2^2$ with $\bm\mu_A=\{\frac{\pi}{2},\frac{1}{2},\frac{1}{2}\}$ which leads to QFI $\mathcal{Q}_\theta|_{\bm\mu_A}=N^2n_2^2 + N(1-n_2^2)$. Let us now prove in the following Theorem that calculating maximum advantage of time flip, i.e., $M_A$, maximizes the QFI.
\begin{theorem}
    The maximum value of QFI, achieved by a $N$-qubit symmetric product probe state in the ITDM protocol for phase estimation is $\bar{\mathcal{Q}}_\theta^{\hat n}=N^2n_2^2 + N(1-n_2^2)$.
    \label{th:opt_product}
\end{theorem}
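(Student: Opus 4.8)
The plan is to establish the matching upper bound: for every admissible triple $\bm\mu=\{\theta_s,p_s,p_c\}$ and every integer $N\ge 1$, $\mathcal{Q}_\theta(A,B)=AN^2+BN\le N^2 n_2^2+N(1-n_2^2)$. Since the right-hand side is already shown above to be attained at $\bm\mu_A=\{\tfrac{\pi}{2},\tfrac12,\tfrac12\}$, this proves the claim. First I would compress the three free parameters into
\[
u:=2p_c-1,\qquad t:=\sqrt{p_s(1-p_s)}\,n_2\sin\theta_s,\qquad c:=n_3\!\left(p_s-\tfrac12\right)+\sqrt{p_s(1-p_s)}\,n_1\cos\theta_s,
\]
so that, using $16p_c(1-p_c)=4(1-u^2)$ and $4p_s(1-p_s)n_2^2\sin^2\theta_s=4t^2$, we get $A=4(1-u^2)t^2$, and expanding the two squared brackets in $B$ (whose $\tfrac14$-pieces add to $1$) gives $B=1-4c^2-4t^2-8ctu$. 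The only range fact needed is $t^2\le n_2^2/4$, from $\sqrt{p_s(1-p_s)}\le\tfrac12$ and $|\sin\theta_s|\le1$.

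Next I would form the deficit $\Delta:=\mathcal{Q}_\theta(A,B)-N^2n_2^2-N(1-n_2^2)=N^2\big(4(1-u^2)t^2-n_2^2\big)+N\big(n_2^2-4t^2-4c^2-8ctu\big)$ and rearrange it as
\[
\Delta=N\Big[\big(4(1-u^2)t^2-n_2^2\big)(N-1)-4(ut+c)^2\Big].
\]
The identity that makes this work is that the sum of the $N^2$-coefficient deficit $4(1-u^2)t^2-n_2^2$ and the $N$-coefficient deficit $n_2^2-4t^2-4c^2-8ctu$ collapses to $-4u^2t^2-4c^2-8ctu=-4(ut+c)^2$. Now $4(1-u^2)t^2\le4t^2\le n_2^2$, so the bracketed first term is a product of a non-positive number with $N-1\ge0$, hence $\le0$; and $-4(ut+c)^2\le0$. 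Therefore $\Delta\le0$. Tracking equality when $N>1$ forces $t^2=n_2^2/4$, $u=0$, $c=0$, i.e.\ exactly $\bm\mu_A$, so the optimal symmetric probe and control are (for $N>1$) essentially unique.

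The one genuinely delicate point — the main obstacle — is that the triple $\{\theta_s,p_s,p_c\}$ enters both $A$ and $B$, so one may not maximize the $N^2$- and $N$-coefficients separately: for small $N$ a sub-maximal $A$ could conceivably be compensated by an enlarged $B$. The whole content of the argument is therefore the algebraic regrouping in the displayed formula for $\Delta$, recognizing the combined deficit as a perfect square $-4(ut+c)^2$ plus a non-positive multiple of $(N-1)$. Once that is in hand, everything else — the change of variables, the expansion of $B$, and the sign checks — is routine bookkeeping, and I would relegate it to an appendix.
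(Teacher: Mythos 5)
Your proposal is correct, and its skeleton coincides with the paper's: both arguments reduce the claim to the two $N$-independent facts $A\le n_2^2$ and $A+B\le 1$, and then observe that $N^2(A-n_2^2)+N(B-1+n_2^2)=N\big[(A-n_2^2)(N-1)+(A+B-1)\big]\le 0$ once those hold (the paper phrases this as: the inequality need only be checked at $N=1$ because the coefficient $n_2^2-A$ of $N$ is nonnegative). Where you genuinely diverge is in the proof of the key sublemma $A+B\le 1$. The paper (Appendix~\ref{app:a+b}) treats $A+B$ as a function of $p_c$, locates the unique critical point $p_c^o$, checks that $\partial^2(A+B)/\partial p_c^2<0$, and concludes the maximum equals $1$; this requires separately excluding the degenerate cases ($n_2=0$, $p_s\in\{0,1\}$, $\theta_s=2m\pi$) and, strictly speaking, a remark about $p_c^o$ possibly lying outside $[0,1]$ (harmless by concavity, but left implicit). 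Your change of variables $u=2p_c-1$, $t=\sqrt{p_s(1-p_s)}\,n_2\sin\theta_s$, $c=n_3(p_s-\tfrac12)+\sqrt{p_s(1-p_s)}\,n_1\cos\theta_s$ instead yields the exact identity $A+B-1=-4(ut+c)^2$, which disposes of all cases uniformly, requires no calculus, and additionally delivers the equality conditions $u=0$, $c=0$, $t^2=n_2^2/4$ for $N>1$, i.e.\ essential uniqueness of the optimal probe and control --- a statement the paper does not make. I have checked the identity and the bound $4(1-u^2)t^2\le n_2^2$; both are correct, so your argument is a clean, slightly stronger variant of the published one.
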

\begin{proof}
    To demonstrate that \( \mathcal{Q}_\theta|_{\bm\mu_A} \) represents the optimal value of the QFI, we have to prove $\mathcal{Q}_\theta|_{\bm\mu_A}-\mathcal{Q}_\theta(A,B)\geq 0$, for all other values of $\theta_s, p_s, p_c$, i.e., 
    \begin{eqnarray}
       \nonumber &&\quad \quad \hspace{0.2cm}N^2n_2^2+N(1-n_2^2)-(AN^2+BN)\geq 0,\\
        &&\implies N(n_2^2-A)\geq B+n_2^2-1.
        \label{eq:ab1}
    \end{eqnarray}
    As $n_2^2$ is the optimal value of $A$, $n_2^2-A\geq 0$. Therefore, if Eq.~\eqref{eq:ab1} is true for $N=1,$ then it remains valid for all values of $N>1$. For $N=1$, it simplifies to  $A+B\leq 1$, which can be proven (see Appendix.~\ref{app:a+b}). Hence, $\bar{\mathcal{Q}}_\theta^{\hat n}=N^2n_2^2 + N(1-n_2^2)$.
\end{proof}
Let us now determine the optimal encoding direction $\hat n$ for which the maximum of QFI is attained. To do so, we define
\begin{eqnarray}
    \bar{\mathcal{Q}}_\theta=\max_{\hat{n}}\bar{\mathcal{Q}}_\theta^{\hat n}=\max_{\hat{n},\bm\mu}\mathcal{Q}_\theta(A,B),
\end{eqnarray}
which includes optimization over $\hat n$, as well as input and control qubits. As a direct consequence of Theorem~\ref{th:opt_product}, the following corollary can be established by choosing $n_2 = 1$. 
\begin{corollary}
    By fixing the optimal encoding axis $\hat n=\hat y$ and $\bm\mu_A=\{\frac{\pi}{2},\frac{1}{2},\frac{1}{2}\}$, the maximum  QFI turns out to be $N^2$, achieving Heisenberg scaling.
\end{corollary}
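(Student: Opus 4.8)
The plan is to obtain the corollary as the final optimization step sitting on top of Theorem~\ref{th:opt_product}. That theorem has already fixed the optimal control and probe parameters $\bm\mu_A=\{\frac{\pi}{2},\frac{1}{2},\frac{1}{2}\}$ and shown that, for an arbitrary but fixed encoding axis $\hat n$, the best $N$-qubit symmetric product probe yields $\bar{\mathcal{Q}}_\theta^{\hat n}=N^2 n_2^2+N(1-n_2^2)$. So all that remains is to maximize this expression over the direction $\hat n$.

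First I would observe that $\bar{\mathcal{Q}}_\theta^{\hat n}$ depends on $\hat n=(n_1,n_2,n_3)$ only through $n_2^2$, which lies in $[0,1]$ since $n_1^2+n_2^2+n_3^2=1$. Writing $t:=n_2^2$ and rearranging, $\bar{\mathcal{Q}}_\theta^{\hat n}=N+t\,N(N-1)$. Because $N\geq 1$ we have $N(N-1)\geq 0$, so the map $t\mapsto N+t\,N(N-1)$ is non-decreasing on $[0,1]$ and is maximized at $t=1$. The constraint $n_2^2=1$ together with $n_1^2+n_2^2+n_3^2=1$ forces $n_1=n_3=0$, i.e.\ $\hat n=\hat y$ (the overall sign of $\hat n$ is irrelevant to the QFI), and substituting $t=1$ gives $\bar{\mathcal{Q}}_\theta=\max_{\hat n}\bar{\mathcal{Q}}_\theta^{\hat n}=N^2$.

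For completeness I would then check that this value is genuinely attained rather than merely bounded: with $\hat n=\hat y$ one has $n_1=n_3=0$, $n_2=1$, and plugging $\bm\mu_A$ into Eq.~\eqref{eqn:phase_estimation_A} and the expression for $B$ gives $A=1$ and $B=0$, so $\mathcal{Q}_\theta(A,B)=N^2$ is achieved exactly by the probe $\ket{\psi}=(\ket{0}+i\ket{1})/\sqrt{2}$ together with the control $\ket{\psi_c}=(\ket{0}+e^{i\theta_c}\ket{1})/\sqrt{2}$. This establishes Heisenberg scaling, and it also shows, consistent with the claims in the abstract, that the optimal probe is independent both of $\theta$ and of the control phase $\theta_c$.

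There is no genuinely hard step here; the corollary is essentially a one-line consequence of Theorem~\ref{th:opt_product}. The only subtlety worth a sentence is the unit-norm constraint on $\hat n$: one must verify that $n_2^2=1$ is actually reachable (it is, via $\hat n=\hat y$) and is not merely a supremum, and that fixing $n_2=1$ does not conflict with the previously chosen $\bm\mu_A$ in a way that would break the identity $A=n_2^2$. Both points are immediate from the explicit formulas for $A$ and $B$, so the argument closes.
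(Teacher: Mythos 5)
Your argument is correct and matches the paper's treatment: the paper derives this corollary as "a direct consequence of Theorem~\ref{th:opt_product} ... by choosing $n_2 = 1$", which is exactly your maximization of $N^2 n_2^2 + N(1-n_2^2)$ over $n_2^2 \in [0,1]$. Your additional attainability check with $\bm\mu_A$ and $\hat n = \hat y$ is a harmless (and welcome) elaboration of the same one-line step.
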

\textbf{Remark.} The maximum QFI for the estimation of the axis parameter of \( U = \exp\left(-i\frac{\theta}{2} \vec{\sigma} \cdot \hat{n}\right) \), where either \(\phi\) or \(\xi\) of the axis \(\hat{n} = \hat{x} \sin{\phi} \cos{\xi} + \hat{y} \sin{\phi} \sin{\xi} + \hat{z} \cos{\phi}\) is to be estimated, is determined to be \( 4N^2 \) following Eq.~(\ref{eq:QFI_singleparam_Unitary_SLD0}) (for details, see Appendix.~\ref{app:axis}). This explicitly demonstrates that our protocol based on TF is capable of achieving the Heisenberg limit for arbitrary parameters in unitaries, as expected from our derivations in Section \ref{sec:ITDM_framework}.  For example, it can be used to estimate the errors in the parameters of a gate-set \cite{Kimmel2015}, required for universal quantum computation.

{\it Measurement scheme to attain maximum QFI.} Let us exhibit that the maximum QFI, i.e., $\mathcal{Q}_\theta=N^2$ for optimal encoding is, in fact, achievable by performing local measurement during decoding. The scheme consists of the following steps: $(1)$ Prepare the control state as $\ket{\psi_c}=\ket{+}=(\ket 0+\ket 1)/\sqrt{2}$ and each qubit of the probe state as $\ket \psi=\ket{+i}=(\ket 0+i\ket 1)/\sqrt{2}$, i.e., the input state is $\ket{\Psi^s}=\ket{+}\ket{+i}^{\otimes N}$. $(2)$ The encoded state after the application of $\mathbb T_{U_\theta}$ with $U_\theta=e^{-i\theta\sigma_y/2}$ (i.e., $\hat n=\hat y$) on $\ket{\Psi_s}$ is given by
\begin{eqnarray}
   \nonumber \ket \Phi = \frac{1}{\sqrt 2}e^{-iN\theta/2}\left(\ket{0}+e^{iN\theta}\ket 1 \right)\otimes\ket{+i}^{\otimes N}.\\
   \label{eq:phi_i_y}
\end{eqnarray}
$(3)$ Perform projective measurement on the control qubit in the basis of $\sigma_x$, i.e., $\{\ket{+},\ket{-}\}$. $(4)$ Repeat the above procedure $\lambda$ number of times and record the measurement outcomes.

The probability statistics for the outcome $\ket\pm$ is given by $p(\pm|\theta) = \frac{1}{2}[1\pm\cos{(N\theta )}]$. Therefore, the FI \cite{Braunstein1994May} coincides with $\bar{\mathcal{Q}}_\theta$, i.e.,
\begin{eqnarray}
    \mathcal{F}_\theta&=&\sum_{b\in\{+,-\}} p(b|\theta)\left[\diffp{\ln{p(b|\theta)}}{\theta}\right]^2,
    \label{eq:cfi}
    \\&=& N^2=\bar{\mathcal{Q}}_\theta.
    \label{eq:cfi_1}
\end{eqnarray} 
Notably, in this scenario, the probe and control states (Eq.~\eqref{eq:phi_i_y}) remain as a pure product state even after bidriectional encoding along $y$-axis, which is a global operation, while still providing HL. This demonstrates that entanglement is not necessary for surpassing SQL in ITDM protocol. However, the two qubit unitaries generated by time flip may generate entanglement for arbitrary input product probe states.

The above-discussed protocol is optimal when $\hat n=\hat y$ and the root mean square error  scales as $1/N$. However, it is interesting to ask -- \textit{Given an arbitrary direction of the encoding axis, how does FI, $\mathcal F_\theta$, scale when the decoding measurement basis is fixed in $\{\ket{+}, \ket -\}?$} We answer this question in the following subsection.

\subsection{Fisher information by restricted measurements}
\label{sec:control_qubit_measure}

By initializing the control and probe states as a product state (Eq.~\eqref{eq:PureStateOutput}) and encoding the axis as $\hat n$, let us consider the decoding measurement only on the control qubit in $\{\ket +, \ket -\}$ basis. In this scenario, the outcome statistics read
\begin{align}
    \nonumber &p(\pm|\theta) = \Tr\left[(\ket \pm\bra{\pm}_c \otimes I_p) \ket{\Phi}\bra \Phi\right]\\ \nonumber &= \frac{1}{2} \pm \frac{\sqrt{p_c(1 - p_c)}}{2} \bigg( e^{i \theta_c} \prod_{i=1}^N \langle \chi_i | \omega_i \rangle + e^{-i \theta_c} \prod_{i=1}^N \langle \omega_i | \chi_i \rangle \bigg).\\
\end{align}
Since $\braket{\chi_i}{\omega_i}$ is a complex number, we parametrize it as $\braket{\chi_i}{\omega_i}=r_i({\bm\nu})e^{if_i({\bm\nu})}$ where ${\bm\nu}=\{\theta,\theta_s,p_s,\hat n\}$. According to this parametrization, 
\begin{align}
    \nonumber &p(\pm|\theta)\\\nonumber &=\frac{1}{2} \pm \sqrt{p_c (1 - p_c)}  \cos \left( \sum_{i=1}^{N} f_i(\bm\nu) + \theta_c \right)\prod_{i = 1}^{N}r_i(\bm\nu),\\
\end{align}
and the FI can be calculated as
\begin{widetext}
\begin{align}
        \nonumber &\mathcal F_{\theta}&\\\nonumber &= 4 p_c (1 - p_c) \bigg[\sum_{j = 1}^{N}\bigg(\prod_{i \neq j}^{N}r_i(\bm\nu)\bigg)\bigg\{\dot{r_j}(\bm\nu) \cos \left( \sum_{i=1}^{N} f_i(\bm\nu) + \theta_c \right) - \left(\frac{r_j(\bm\nu)}{N}\right) \sin \left( \sum_{i=1}^{N} f_i(\bm\nu) + \theta_c \right)\left(\sum_{i=1}^{N}\dot{f_i}(\bm\nu)\right) \bigg\} \bigg]^2 \\ &\quad\times\left[1 - 4p_c (1 - p_c)  \left(\prod_{i=1}^{N}r_i(\bm\nu)\right)^2 \cos^2 \left(\sum_{i=1}^{N}{f_i}(\bm\nu) + \theta_c \right)\right]^{-1},
\label{eq:FI_singleparam_contolqubitmeasure}
\end{align}
which can be shown to be optimal at $p_c=1/2$. We consider $\ket{\psi_i} = \ket{\psi}$ for all $i$, which is determined to be optimal based on rigorous numerical investigations. This implies $r_i(\bm\nu)=r(\bm\nu)$, $f_i(\bm\nu)=f(\bm\nu)$ $\forall i$ and along with $p_c=1/2$, we have 

    \begin{eqnarray}
        \mathcal{F}_\theta=\frac{N^2  r^{2N-2}(\bm\nu) \left\{ \dot{r}(\bm\nu) \cos \left( N f(\bm\nu) + \theta_c \right) - r(\bm\nu) \dot{f}(\bm\nu) \sin \left( N f(\bm\nu) + \theta_c \right) \right\}^2 }{1 -  r^{2N}(\bm\nu) \cos^2 \left( N f(\bm\nu) + \theta_c \right)}.
\label{eq:FI_singleparam_contolqubitmeasure}
    \end{eqnarray}
\end{widetext}
To approach the question of the optimal probe state, similar to QFI, we define
\begin{equation}
    \bar{\mathcal{F}}_\theta = \max_{\hat n}\bar{\mathcal{F}}_\theta^{\hat n},
\end{equation} 
where $\bar{\mathcal{F}}_\theta^{\hat n}$ is the maximum  FI for a given axis $\hat n$ by choosing optimal probe and performing decoding measurement in $\{\ket +,\ket -\}$ basis on optimal control qubit, i.e., $\bar{\mathcal{F}}_\theta^{\hat n} = \max_{\bm\mu,\theta_c}\mathcal{F}_\theta$. Note that $U_\theta^\dagger U_\theta^T$ being unitary, we have $0\leq r(\bm\nu)\leq 1$. Therefore, from Eq.~\eqref{eq:FI_singleparam_contolqubitmeasure}, it is evident that if $r(\bm\nu)<1$,  $\mathcal F_\theta$ decreases exponentially with increasing value of $N$. This means that it is desirable to set $r(\bm\nu)=1$ by taking initial probe state $\ket \psi=\ket{u(\hat n,\theta)^{\pm}}$ where $\ket{u(\hat n,\theta)^{\pm}}=(\mathrm g^\mp(\hat n, \theta)\ket{0}+\ket{1})/\sqrt{1+|\mathrm g^\mp(\hat n, \theta)|^2}$ is the eigenstate of $U_\theta^\dagger U_\theta^T$, i.e., $U_\theta^\dagger U_\theta^T\ket{u(\hat n,\theta)^{\pm}} = \exp\left({\pm i \mathrm{f}(n_2, \theta)}\right)\ket{u(\hat n,\theta)^{\pm}}$.
In this scenario, $p_s= \mathrm g^\mp(\hat n,\theta)^2/\sqrt{1+|\mathrm g^\mp(\hat n, \theta)|^2}$ and $\theta_s=i \ln(1+|\mathrm g^\mp(\hat n, \theta)|^2 - \mathrm g^\mp(\hat n, \theta)^2)/2$. Moreover, we have 
\begin{eqnarray}
    \nonumber\mathrm f(n_2,\theta)=\tan^{-1}\left(\frac{\sqrt{4n_2^2\sin^2{\frac{\theta}{2}}\left(1-n_2^2 \sin^2{\frac{\theta}{2}}\right)}}{1-2n_2^2\sin^2{\frac{\theta}{2}}}\right),\\
\end{eqnarray} 
and the explicit forms of $g^\mp(\hat n,\theta)$ are given in Appendix.~\ref{app:udut}, which shows that the eigenvectors depends on the encoding axis and the parameter $\theta$ itself. This implies $\dot r(\bm\nu)=0$ and  $f(\bm\nu)=\mathrm f(n_2, \theta)$. Therefore, preparing the probe $\ket{\psi}^{\otimes N}$ in either $\ket{u(\hat n, \theta)^+}^{\otimes N}$ or $\ket{u(\hat n, \theta)^-}^{\otimes N}$ and $\ket{\psi_c}=(\ket 0+e^{i\theta_c}\ket 1)/\sqrt{2}$ and measuring the control qubit in the $\{\ket +,\ket -\}$ basis lead to FI (following Eq.~\eqref{eq:FI_singleparam_contolqubitmeasure})
 \begin{eqnarray}
     \bar{\mathcal F}_\theta^{\hat n} = N^2\dot{\mathrm f}(n_2, \theta)^2.
 \end{eqnarray}
However, the complexity is two-folded -- $(1)$ the input state parameters, $\theta_s,p_s,$ is $\theta$-dependent, $(2)$ $\bar{\mathcal F}_\theta^{\hat n}$ is $\theta$-dependent which means that the estimation accuracy may fluctuate or degrade depending on the true value of $\theta$. Notice that choosing $\hat n=\pm \hat y$ gets rid of such complexities and gives $\mathrm{f}=-\theta\pm \pi$ which achieves $\bar{\mathcal F}_\theta = N^2=\mathcal{\bar Q}_\theta$ which is consistent with Eq.~\eqref{eq:cfi_1}. Next, we show that the above-mentioned subtlety can be removed when the parameter $\theta$ is taken to be small.


\subsubsection{Optimal input probes for small $\theta$}

Let us write the input probe as $\rho=\ketbra{\psi}{\psi}=(I+\hat{s}.\vec \sigma)/2$ where $\hat{s}=\hat xs_x + \hat ys_y + \hat zs_z$ and $s_x=2\sqrt{p_s(1-p_s)}\cos{\theta_s}, ~s_y=2\sqrt{p_s(1-p_s)}\sin{\theta_s},~ \text{and } s_z=(2p_s-1)$. The overlap $\braket{\chi}{\omega}$ can be written as
\begin{eqnarray}
    \nonumber\braket{\chi}{\omega} &=& \langle\psi|U_\theta^\dagger U_\theta^T|\psi\rangle = \Tr(U_\theta^T \rho U_\theta^\dagger) \\ &=& 1 - n_2^2(1 - \cos\theta) + in_2n_3s_x(1-\cos\theta)\nonumber\\
&&  -in_1n_2s_z(1-\cos\theta) +in_2s_y\sin\theta.
\label{eq:xw_smallt}
\end{eqnarray}
Therefore, the modulus and phase of the complex number in Eq.~\eqref{eq:xw_smallt} is given by
\begin{align}
   \nonumber  r(\bm\nu) &= \bigg[\left\{1 - n_2^2(1 - \cos\theta)\right\}^2 +\big\{n_2s_y\sin\theta \\\nonumber &\quad+(n_2n_3s_x-n_1n_2s_z)(1-\cos\theta)\big\}^2\bigg]^{1/2},\\
   \label{eq:r_nu}
\end{align}
and 
\begin{align}
    \nonumber &f(\bm\nu) \\\nonumber &= \tan^{-1}\frac{n_2s_y\sin\theta+(n_2n_3s_x-n_1n_2s_z)(1-\cos\theta)}{1 - n_2^2(1 - \cos\theta)},\\
    \label{eq:f_nu}
\end{align}
respectively. When $\theta$ is very small ($\theta=\tilde \theta$), expanding Eqs.~\eqref{eq:r_nu} and \eqref{eq:f_nu} upto second order in $\theta$ we obtain (see details in  Appendix.~\ref{app:opt_probe_control})
\begin{eqnarray}
    \nonumber \mathcal{F}_{\tilde\theta}&=&s_y^2n_2^2N^2 + (1-s_y^2)n_2^2 N\delta_{\theta_c,0},\\ &=& s_y^2n_2^2N(N-\delta_{\theta_c,0})+n_2^2N \delta_{\theta_c,0},
    \label{eq:ftt}
\end{eqnarray}
where $\delta_{\theta_c,0}=1$ if $\theta_c=0$ and $\delta_{\theta_c,0}=0$ elsewhere. Interestingly, in the limit $N\to\infty$, Fisher information  $\mathcal{F}_{\tilde\theta}\big|_{N\to\infty}=s_y^2n_2^2N^2$ is $\tilde\theta$-independent and reaches HL irrespective of any probe state. From Eq.~\eqref{eq:ftt}, it immediately follows that the optimal probe for  $\tilde\theta$ estimation is given by $s_y=\pm 1$, i.e., $\bm\mu=\bm\mu_A=\{\pi/2+2m\pi,1/2,1/2\}\forall m\in\mathbb{Z}$  which implies 
\begin{eqnarray}
    \bar{\mathcal{F}}^{\hat n}_{\tilde\theta}=n_2^2 N^2 = \mathcal{\bar Q}_{\tilde\theta}^{\hat n}|_{N\to\infty}.
\end{eqnarray}
Hence, we arrive at the following result, which shows that the two-fold complexity previously described can be eradicated under the given scenario: 
\begin{theorem}
    The FI, $\bar{\mathcal{F}}^{\hat{n}}_{\tilde\theta} = n_2^2 N^2$, independent of $\tilde \theta$ to be estimated which is small, saturates the QFI, $\mathcal{\bar{Q}}_{\tilde\theta}^{\hat{n}} \big|_{N \to \infty}$ asymptotically, using the control qubit $\ket{\psi_c} = (\ket{0} + e^{i\theta_c} \ket{1}) / \sqrt{2}$ along with the probe state as $\ket{\pm i}=(\ket{0}+i\ket{1})/\sqrt{2}$ and the decoding measurement in the $\sigma_x$ basis.
\end{theorem}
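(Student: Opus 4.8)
The plan is to obtain the statement as the large-$N$ limit of the restricted control-qubit Fisher information already derived, evaluated at the optimal probe. First I would start from the symmetric-probe expression for $\mathcal{F}_\theta$ in Eq.~\eqref{eq:FI_singleparam_contolqubitmeasure} with $p_c=1/2$ (optimality of $p_c=1/2$ having been established just above it), and insert the small-parameter expansions of the overlap modulus $r(\bm\nu)$ and phase $f(\bm\nu)$ from Eqs.~\eqref{eq:r_nu}–\eqref{eq:f_nu}. Using $1-\cos\tilde\theta\simeq\tilde\theta^{2}/2$ and $\sin\tilde\theta\simeq\tilde\theta$ one reads off $r(\bm\nu)=1+O(\tilde\theta^{2})$ (so $\dot r(\bm\nu)=O(\tilde\theta)$ and $r^{2N-2}(\bm\nu)\to 1$) and $f(\bm\nu)\simeq n_2 s_y\,\tilde\theta$ (so $\dot f(\bm\nu)\simeq n_2 s_y$). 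Substituting, the $\dot r\cos(\cdot)$ term in the numerator is one order higher in $\tilde\theta$ than the $r\dot f\sin(\cdot)$ term, so the numerator collapses to $N^{2}(n_2 s_y)^{2}\sin^{2}(Nf+\theta_c)$ at leading order, while the denominator becomes $1-r^{2N}\cos^{2}(Nf+\theta_c)\to\sin^{2}(Nf+\theta_c)$; the oscillatory factors cancel, leaving $\mathcal{F}_{\tilde\theta}=s_y^{2}n_2^{2}N^{2}+(1-s_y^{2})n_2^{2}N\,\delta_{\theta_c,0}$ as in Eq.~\eqref{eq:ftt}. The one genuinely delicate point — and the main obstacle — is justifying the subleading, linear-in-$N$ piece and why it is present only when $\theta_c=0$: at $\theta_c=0$ the argument $Nf+\theta_c$ itself vanishes as $\tilde\theta\to 0$, so numerator and denominator both tend to zero and one must keep the next order in the joint $(\tilde\theta,\text{trig})$ expansion, which is exactly the computation deferred to Appendix~\ref{app:opt_probe_control}; for $\theta_c\neq 0$ the $\delta$ term is absent and the leading behaviour is clean.

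Next I would take $N\to\infty$ in Eq.~\eqref{eq:ftt}: the linear term is subdominant, so $\mathcal{F}_{\tilde\theta}\big|_{N\to\infty}=s_y^{2}n_2^{2}N^{2}$, which is manifestly independent of $\tilde\theta$. Optimizing the probe reduces to maximizing $s_y^{2}\le 1$, attained at $s_y=\pm1$, i.e. $p_s=1/2$ and $\theta_s=\pi/2+2m\pi$ — precisely the Bloch vector of $\ket{\pm i}=(\ket{0}+i\ket{1})/\sqrt{2}$; combined with $p_c=1/2$ (the control $(\ket 0+e^{i\theta_c}\ket 1)/\sqrt2$) and the $\{\ket+,\ket-\}$, i.e. $\sigma_x$, measurement on the control, this yields $\bar{\mathcal{F}}^{\hat n}_{\tilde\theta}=n_2^{2}N^{2}$. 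Note that neither $s_y=\pm1$ nor $p_c=1/2$ references $\tilde\theta$, which is exactly the removal of the two-fold $\tilde\theta$-dependence advertised in the statement.

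Finally, for the saturation claim I would invoke Theorem~\ref{th:opt_product}: the maximal QFI for phase estimation along $\hat n$ is $\bar{\mathcal{Q}}^{\hat n}_\theta=N^{2}n_2^{2}+N(1-n_2^{2})$, whose leading term as $N\to\infty$ is $n_2^{2}N^{2}$. Hence (for $n_2\neq 0$, the $n_2=0$ case being the NAC where there is nothing to prove) $\bar{\mathcal{F}}^{\hat n}_{\tilde\theta}/\bar{\mathcal{Q}}^{\hat n}_{\tilde\theta}\to 1$, so the control-qubit $\sigma_x$ measurement on the probe $\ket{\pm i}^{\otimes N}$ is asymptotically optimal, which is the assertion of the theorem.
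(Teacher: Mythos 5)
Your proposal is correct and follows essentially the same route as the paper: it expands $r(\bm\nu)$ and $f(\bm\nu)$ from Eqs.~\eqref{eq:r_nu}--\eqref{eq:f_nu} for small $\tilde\theta$, recovers Eq.~\eqref{eq:ftt} (with the careful $\theta_c=0$ versus $\theta_c\neq 0$ distinction that the paper relegates to Appendix~\ref{app:opt_probe_control}), takes $N\to\infty$, optimizes at $s_y=\pm 1$ to identify the probe $\ket{\pm i}$, and compares with Theorem~\ref{th:opt_product} to conclude asymptotic saturation of the QFI. No gaps; the only addition beyond the paper's argument is your explicit remark on the trivial $n_2=0$ (NAC) case.
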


\section{Entanglement can not beat $N^2$-scaling}
\label{sec:entanglement}

Our analysis upto now has focused on utilizing product states as the initial probe state. Let us now examine whether the use of an entangled probe can improve scaling, surpassing the Heisenberg limit or not. To address the same, we write the general $N$-qubit probe state as 
\begin{eqnarray}
    \ket{\Psi} = \sum_{i=1}^{2^N} \Lambda_i\ket{\lambda_i},
    \label{eq:state_all}
\end{eqnarray}
where $\sum_{i=1}^{2^N}\ketbra{\lambda_i}{\lambda_i} = I_{2^N}$ and $\braket{\lambda_i}{\lambda_j}=\delta_{ij}$. Defining $\ket{\lambda_i}=\bigotimes_{j=1}^N\ket{\psi^i_{j}},$ we rewrite $\ket{\Psi} = \sum_{i=1}^{2^N} \Lambda_i \bigotimes_{j=1}^N\ket{\psi^i_{j}},$ where $\ket{\psi_j^i}$ is the $i$th basis state for the $j$th qubit with $\braket{\psi_j^i}{\psi_j^k}=\delta_{ik}$ and $\sum_{i=1}^2\ketbra{\psi_j^i}{\psi_j^i}=I_2$. Extending beyond the use of product states as probes, our theorem below indicates that entanglement does not provide any additional advantage in achieving scaling of the QFI with the number of qubits beyond the HL.
\begin{theorem} 
In the ITDM protocol, the maximum QFI extractible from any probe state, including all entangled states, is upper bounded by $\gamma N^2 + \zeta N,$ where \( \gamma \) and \( \zeta \) are constants, independent of \( N \).

    \label{th_ent}
\end{theorem}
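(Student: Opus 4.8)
The plan is to compute the QFI of the encoded state $\ket{\Phi} = \mathbb{T}_{U_\theta}\ket{\Psi}$ directly from the general decomposition in Eq.~\eqref{eq:state_all} and then bound each term by a quantity that grows at most quadratically in $N$. First I would write $\ket{\Phi} = \sqrt{p_c}\ket{0}\otimes U_\theta^{\otimes N}\ket{\Psi} + e^{i\theta_c}\sqrt{1-p_c}\ket{1}\otimes (U_\theta^T)^{\otimes N}\ket{\Psi}$, abbreviate $\ket{A} := U_\theta^{\otimes N}\ket{\Psi}$ and $\ket{B} := (U_\theta^T)^{\otimes N}\ket{\Psi}$, and plug into the pure-state QFI formula \eqref{eqn: QuantumFisherInformationPureState1}. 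This gives $\mathcal{Q} = 4\big[p_c\langle\dot A|\dot A\rangle + (1-p_c)\langle\dot B|\dot B\rangle - |p_c\langle\dot A|A\rangle + (1-p_c)\langle\dot B|B\rangle|^2\big]$, the control-qubit phase $\theta_c$ contributing only through the cross terms which, as in the product case, drop out of the real combination. The whole problem then reduces to bounding $\langle\dot A|\dot A\rangle$ and $|\langle\dot A|A\rangle|$ (and their $B$-counterparts).

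The key step is the observation that $\ket{A} = U_\theta^{\otimes N}\ket{\Psi}$ is parametrized through a \emph{single} $\theta$ appearing in each tensor factor, so $\partial_\theta$ acts by the Leibniz rule as a sum of $N$ terms: $\ket{\dot A} = \sum_{k=1}^N \big(U_\theta^{\otimes k-1}\otimes \dot U_\theta \otimes U_\theta^{\otimes N-k}\big)\ket{\Psi}$. Writing $h_k := U_\theta^{(k)\dagger}\dot U_\theta^{(k)}$ for the (anti-Hermitian-up-to-constant) local generator acting on the $k$-th qubit, one gets $\langle\dot A|A\rangle = \sum_k \langle\Psi| h_k^\dagger |\Psi\rangle$ and $\langle\dot A|\dot A\rangle = \sum_{j,k}\langle\Psi|h_j^\dagger h_k|\Psi\rangle$. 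Since each $h_k$ is a bounded single-qubit operator with operator norm controlled by $\|\dot U_\theta\|$ (a $\theta$-dependent but $N$-independent constant), Cauchy–Schwarz gives $|\langle\Psi|h_j^\dagger h_k|\Psi\rangle| \le \|h_j\|\,\|h_k\| \le c^2$, whence $\langle\dot A|\dot A\rangle \le c^2 N^2$ and similarly for $B$. This already yields $\mathcal{Q} \le 4c^2N^2$, i.e.\ the claimed form with $\gamma = 4c^2$ and $\zeta = 0$ suffices; if one wants the sharper split into $N^2$ and $N$ pieces one separates the diagonal $j=k$ terms (contributing $O(N)$) from the off-diagonal ones and uses the subtracted $|\langle\dot A|A\rangle|^2$ term to cancel part of the leading contribution exactly as in Eq.~\eqref{eq:symmproveeqnQ0}, but this refinement is not needed for the statement as phrased.

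The main obstacle I anticipate is purely bookkeeping: making the constant $c$ explicit and uniform. One must check that $\|h_k\| = \|\dot U_\theta\|$ (or $\|\dot U_\theta^T\|$ for the $B$-branch) is genuinely independent of which qubit $k$ and of $N$ — this is immediate because $h_k$ acts as the identity on all but one tensor factor — and that the same constant bounds the $B$-branch, which follows since $U_\theta^T$ is unitary with $\|\dot{U_\theta^T}\| = \|\dot U_\theta\|$. A secondary subtlety is that the cross terms $\langle\dot A|B\rangle$ etc., which appear before one takes the real part, must be shown to combine into something that does not spoil the bound; but since $\mathcal Q$ is manifestly real and non-negative and built only from norms of $\ket{A},\ket{B},\ket{\dot A},\ket{\dot B}$ after the $|\cdot|^2$ is expanded, triangle inequality on $|p_c\langle\dot A|A\rangle + (1-p_c)\langle\dot B|B\rangle| \le p_c c N + (1-p_c)cN = cN$ closes everything. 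I would therefore present the argument as: (i) Leibniz expansion of the derivative, (ii) operator-norm bound on each local generator, (iii) Cauchy–Schwarz on the $N^2$ cross terms, (iv) collect constants. The details are routine and can be relegated to an appendix.
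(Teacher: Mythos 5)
Your proposal is correct, and it takes a genuinely different route from the paper. The paper expands the probe in a fixed product basis $\ket{\Psi}=\sum_i\Lambda_i\ket{\lambda_i}$, drops the subtracted term to get $\mathcal{Q}_\theta\le 4\langle\dot\Phi|\dot\Phi\rangle$, and then exploits unitarity of $U_\theta$ and $U_\theta^T$ to show that only pairs of basis strings differing in at most two positions contribute; each of the three resulting classes ($X=0,1,2$) is bounded separately by $\gamma_iN^2+\zeta_iN$ via triangle and Cauchy--Schwarz inequalities on the coefficient vectors. You instead factor the derivative through the local generators $h_k=U_\theta^{\dagger}\dot U_\theta$ acting on single qubits, so that $\langle\dot A|\dot A\rangle=\sum_{j,k}\langle\Psi|h_j^{\dagger}h_k|\Psi\rangle$ is a sum of $N^2$ matrix elements each bounded by $\|\dot U_\theta\|^2$, which is manifestly $N$-independent; the same constant covers the backward branch since transposition preserves the operator norm. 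This is essentially the standard bounded-local-generator argument for Heisenberg scaling, adapted to the control-qubit structure, and it is both shorter and sharper: it yields an explicit constant $\gamma=4\|\dot U_\theta\|^2$ with $\zeta=0$, and it extends verbatim to qudits, whereas the paper's constants are left as unspecified functions of the maxima $\mathcal{X},\mathcal{W},\mathcal{A},\mathcal{B},A',B'$. What the paper's combinatorial decomposition buys is a closer structural connection to the product-probe calculation (its $X=0$ term is exactly the product-state bound of Sec.~\ref{sec:product_probe_itdm}), but for the statement as phrased your argument suffices and all the steps you flag as "bookkeeping" do go through.
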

\begin{proof}
Let us present a brief outline of our comprehensive proof, (for details, see Appendix.~\ref{app:th2}). Using the general $N$-qubit  probe state in Eq.~\eqref{eq:state_all}, we derive an upper bound on QFI as
\begin{align}
    \mathcal{Q}_{\theta} &\leq {\tilde{\mathcal{Q}}_\theta}= 4\langle \dot{\Phi} | \dot{\Phi} \rangle, \nonumber\\  &=4\sum_{i=1}^{2^N}\sum_{k=1}^{2^N}\Lambda_i\Lambda_k^* \Bigg[ p_c \sum_{m=1}^{N} \Big( \dot\chi_{m}^{ki} \prod_{j \neq m} a_{j}^{ki} \nonumber\\ 
& \qquad \qquad + \sum_{l \neq m} \alpha_{m}^{ki} \alpha_{l}^{ik *} \prod_{j \neq m, l} a_{j}^{ki} \Big) \nonumber\\
&\quad + (1 - p_c)\sum_{m=1}^{N} \Big( \dot\omega_{m}^{ki} \prod_{j \neq m} b_{j}^{ki} \nonumber\\
& \qquad \qquad + \sum_{l \neq m} \beta_{m}^{ki} \beta_{l}^{ik *} \prod_{j \neq m, l} b_{j}^{ki} \Big) \Bigg],
\label{eq:uppbound}
\end{align}
\begin{figure*}
\begin{subfigure}
\centering
\includegraphics[width=0.3\textwidth]{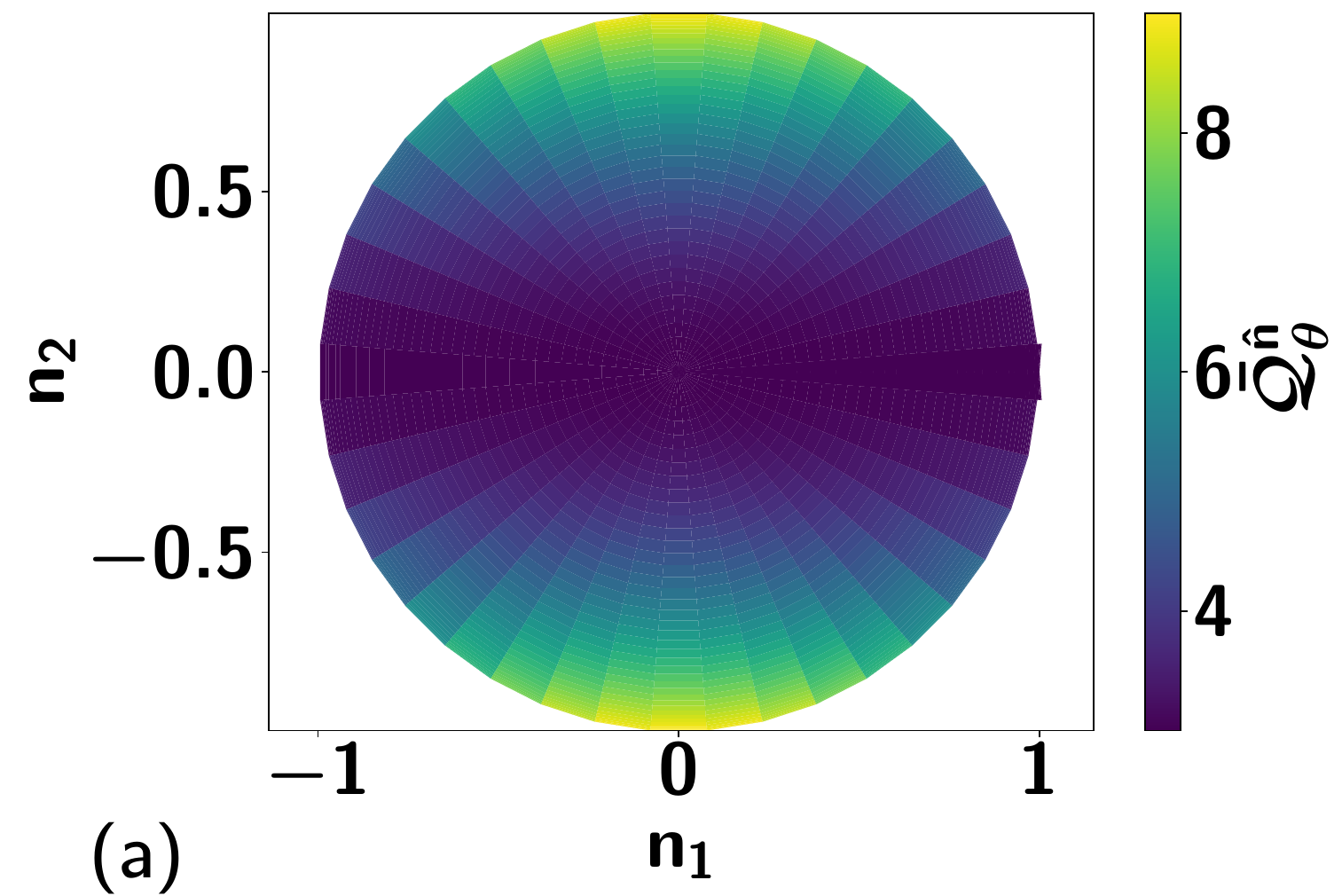}
\end{subfigure} 
\begin{subfigure}
\centering
\includegraphics[width=0.3\textwidth]{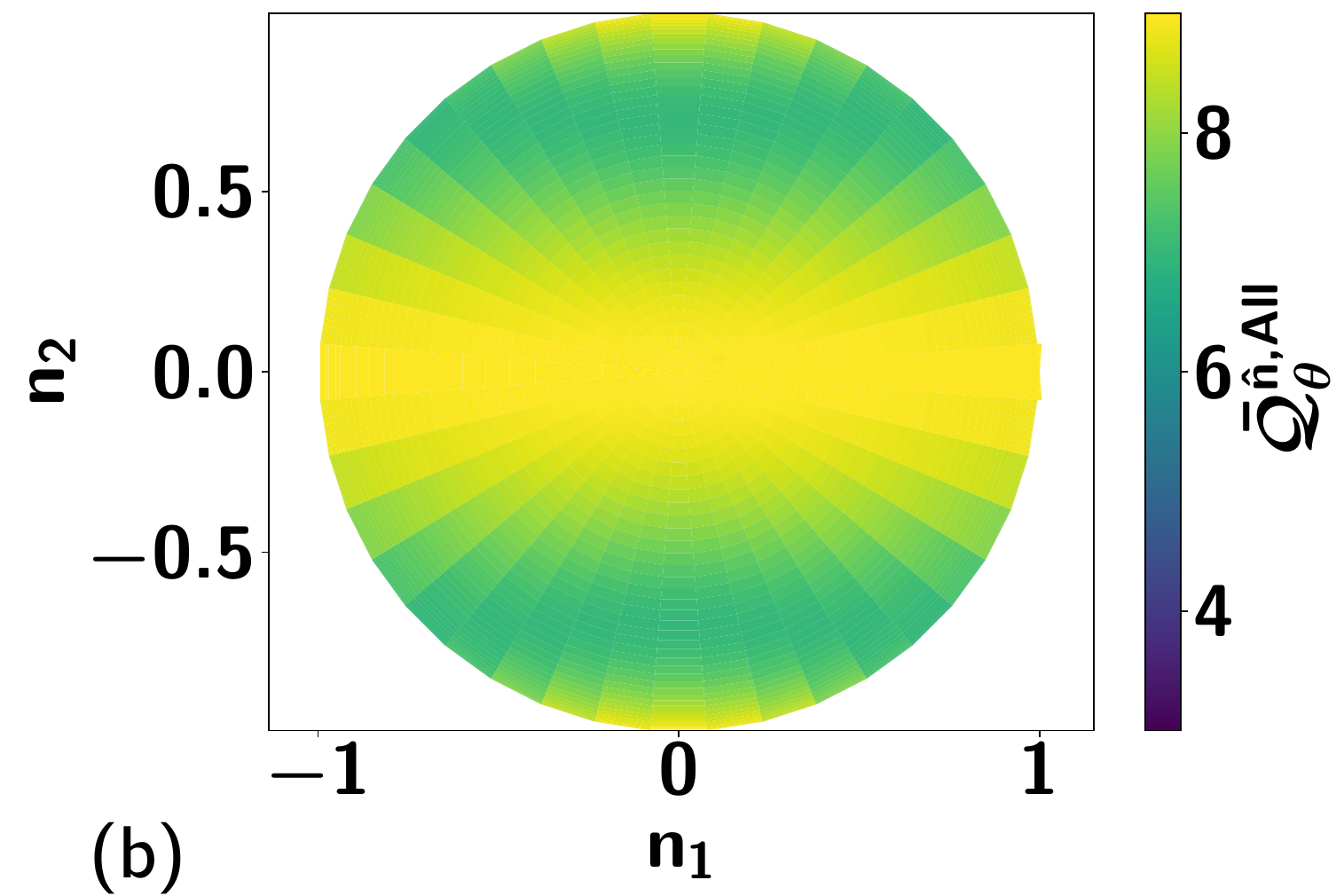}
\end{subfigure} 
\begin{subfigure}
\centering
\includegraphics[width=0.3\textwidth]{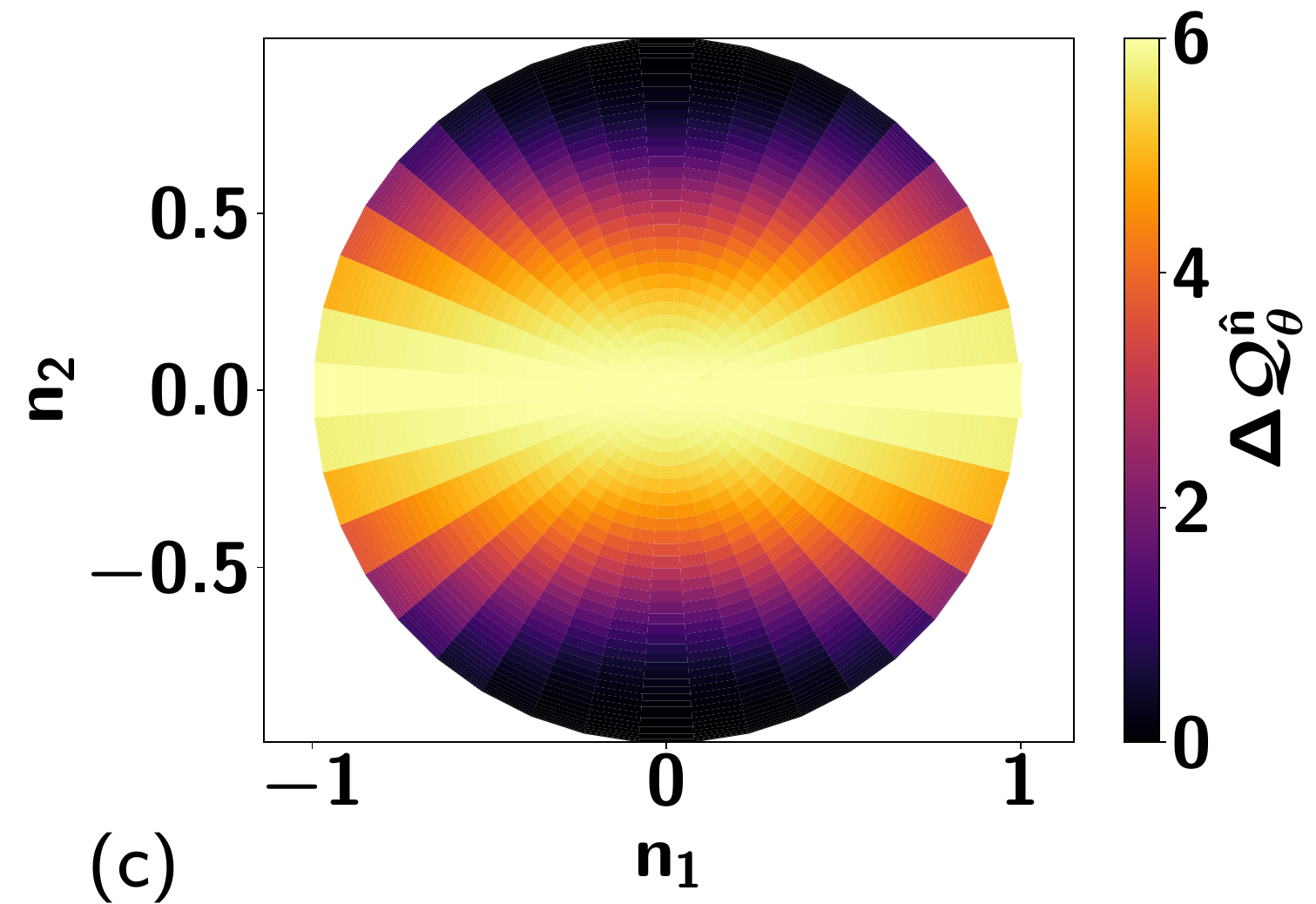}
\end{subfigure} 
\caption{ \textbf{QFI  using three-qubit probe states for  estimating \(\theta\) in \(U_{\theta}\).}  Map plot of (a) $\mathcal{\bar Q}_\theta^{\hat n}$, optimized over all product states, (b) $\mathcal{\bar Q}_\theta^{\hat n,All}$ optimized over all states,   (c) $\Delta \mathcal Q_\theta^{\hat n}=\mathcal{\bar Q}_\theta^{\hat n,All}-\mathcal{\bar Q}_\theta^{\hat n}$ against encoding axis, \(n_1\) (horizontal axis) and \(n_2\) (vertical axis). We see that the difference vanishes  near $n_2 = \pm 1$ precisely where both (a) and (b) attain their maximal QFI of $N^2=9$. We conclude that product states can give as much QFI as entangled states when optimized over the axis for a three-qubit probe system. All the axis are dimensionless.}
\label{fig:optimalstate_vs_ourstate}
\end{figure*}
where 
\begin{eqnarray}
\langle \dot\chi^k_{m} | \dot\chi^i_{m} \rangle &=& \dot\chi_{m}^{ki}, \quad \langle \chi^k_{j} | \chi^i_{j} \rangle = a_{j}^{ki}, \nonumber\\
\langle \dot\omega^k_{m} | \dot\omega^i_{m} \rangle &=& \dot\omega_{m}^{ki}, \quad \langle \omega^k_{j} | \omega^i_{j} \rangle = b_{j}^{ki}, \nonumber\\
\langle \dot\chi^k_{m} | \chi^i_{m} \rangle &=& \alpha^{ki}_{m}, \quad \langle \dot\omega^k_{j} | \omega_{j}^i \rangle = \beta^{ki}_{j}, 
\end{eqnarray} and $\ket{\chi_j^i} = U_\theta\ket{\psi^i_j}$ and $\ket{\omega_j^i} = U_\theta^T\ket{\psi^i_j}$. In Eq.~\eqref{eq:uppbound}, the terms $\prod_{j \neq m} a_{j}^{ki} $, and $\prod_{j \neq m} b_{j}^{ki} $ are the products of the inner products of $(N-1)$ qubits. Hence, if the qubits in $\ket{\lambda_i}$ and $\bra{\lambda_k}$ differ in more than one position, $\prod_{j \neq m} a_{j}^{ki} = \prod_{j \neq m} b_{j}^{ki}=0$. Similarly, if the qubits in $\ket{\lambda_i}$ and $\bra{\lambda_k}$ differ in more than two positions, $\prod_{j \neq m, l} b_{j}^{ki}=\prod_{j \neq m, l} a_{j}^{ki}=0$. Defining, $\mathcal{\tilde{Q}}_{\theta}^{ X}$ as the term which involves contribution only from $N$-qubit basis states where qubits differ in $X$ number of positions, we can write $\mathcal{\tilde{Q}}_{\theta}=\sum_{X=0}^N\mathcal{\tilde{Q}}_{\theta}^{ X}=\sum_{X=0}^2\mathcal{\tilde{Q}}_{\theta}^{ X}$ since, $\mathcal{\tilde{Q}}_{\theta}^{ X}=0$ $\forall X\geq 3$. In Appendix.~\ref{app:th2}, we show that each of these three terms is bounded by an expression of the form $\gamma_iN^2+\zeta_i N$. Hence, $\mathcal{\tilde Q}_\theta\leq \gamma N^2+\zeta N$ where $\gamma,\zeta$ are constants, which are independent of $N$.
\end{proof}

Although entangled states do not offer superior scaling compared to product probe states, we investigate whether they can yield higher QFI. To this aim, given an encoding axis $\hat n$, we perform numerical maximization of QFI over the whole state space (see Eq.~\eqref{eq:state_all}) for $N=3$ denoted as $\mathcal{\bar Q}_\theta^{\hat n,All}$ (see Fig.~\ref{fig:optimalstate_vs_ourstate}(b)). Interestingly, for $n_2=1$, $\mathcal{\bar Q}_\theta^{\hat y,All}$ is the same as that with QFI for optimal product probe state, i.e.,  $\mathcal{\bar Q}_\theta^{\hat y,All}=\bar{\mathcal{Q}}_\theta^{\hat y}=9=\mathcal{\bar Q}_\theta$. Notice that $\hat n=\hat y$ is indeed the universal optimal encoding axis which means maximal attainable value $(N^=9)$ of QFI by optimizing over the whole state space and the encoding axis. However, as value of $ n_2$ deviates from unity, the difference between the maximum value of QFI with optimal product probe and $\mathcal{\bar Q}_\theta^{\hat n,All}$, i.e., $\Delta \mathcal Q_\theta^{\hat n}=\mathcal{\bar Q}_\theta^{\hat n,All}-\mathcal{\bar Q}_\theta^{\hat n}$ increases significantly as demonstrated in Fig.~\ref{fig:optimalstate_vs_ourstate}(c). For $n_2=0,$ $\bar{\mathcal{Q}}_\theta^{\hat n, All}=9$ and  $\bar{\mathcal{Q}}_\theta^{\hat n}=3$ where the difference $\Delta \mathcal Q_\theta^{\hat n}$ reaches a maximum value of $6$. This observation ensures that the optimal probe states which give higher QFI than fully separable probe states must be entangled in at least some bipartition although no scaling benefit is provided by the entangled probes. Notice that $\Delta \mathcal Q_\theta^{\hat n}\approx\mathcal{O}(10^{-2})$ in the region $\pm 0.95\lesssim n_2\leq \pm1$ irrespective of $n_1$ and $n_3$ which shows the robustness of optimal product probe states compared to the entangled ones against nonoptimal encoding.


\section{Advantage in noisy unitary encoding}
\label{sec:noisy_adv}

Both theoretical and experimental perspectives of quantum metrology demand the investigation on how the protocol gets affected in the presence of noisy environment. Moreover, it was shown that the precision in estimating parameters in a noisy scenario can also be improved by using indefinite causal order \cite{chapeau2021noisy}. Therefore, it is interesting to analyze the ITDM scheme under noise. For inspection, we consider a noisy input probe, $\rho=(I+\vec s.\vec\sigma)/2$ with $|\vec s|\leq 1,$ which is influenced by depolarizing noise, $\mathcal{N}(\rho)=q\rho+(1-q)I/2$ \cite{Nielsen2012} after unitary encoding $U_\theta$. Consequently, the effective channel, $\mathcal{N}_\theta(\rho)=q U_\theta \rho U_\theta^\dagger + (1-q)I/2$ is termed as noisy unitary channel with Kraus operators $\{K_0 = \sqrt{1-p}U_\theta, K_1 = \sqrt{p/3}\sigma_xU_\theta, K_2 = \sqrt{p/3}\sigma_yU_\theta, K_3 = \sqrt{p/3}\sigma_zU_\theta\}$ where $q=1-4p/3\in[0,1]$. Application of time-flip to such a bistochastic channel \cite{chiribella2022quantum} leads to a quantum operation $T_{\mathcal N_\theta}(\rho_c \otimes \rho)=\sum_{j=0}^3 F_j (\rho_c\otimes \rho)F_j^\dagger$ with Kraus operators $F_j = \ketbra{0}{0}\otimes K_j + \ketbra{1}{1}\otimes K_j^T$ where $\rho_c=\ket{\psi_c}\bra{\psi_c}$. Therefore,  after the encoding procedure in ITDM scheme with noisy encoding operation (NITDM), the output state is given by $\Theta = \mathbb T_{\mathcal N_\theta}(\rho_c \bigotimes_{i=1}^N\rho_i) = \sum_{j \in \Pi}\mathbb{F}_j(\rho_c\bigotimes_{i=1}^N\rho_i)\mathbb{F}_j^\dagger$ with $\mathbb F_j = \ketbra{0}{0} \bigotimes_{i = 1}^N K_{j[i]} +\ketbra{1}{1}\bigotimes_{i = 1}^N{K_{j[i]}^T}$, where $\Pi$ is the set of $4^N$ integer strings $j$, with each string specifying the Kraus operator $K_{j[i]}$ acted on the qubit $i$, and $\sum_{j \in \Pi}\mathbb{F}_j^\dagger\mathbb{F}_j = I$ and $\sum_{j \in \Pi} \bigg(\bigotimes_{i=1}^NK_{j[i]}\bigg)^\dagger \bigg(\bigotimes_{i=1}^NK_{j[i]}\bigg) = I$. This can further be simplified as
\begin{align}
\Theta &= p_c |0\rangle \langle 0| \bigotimes_{i=1}^N \left( \sum_{j=0}^3 K_j \rho_i K_j^{\dagger} \right)\notag\\
&\quad+ \sqrt{p_c (1 - p_c)} e^{-i \theta_c} |0\rangle \langle 1| \bigotimes_{i=1}^N \left( \sum_{j=0}^3 K_j \rho_i {K_j^T}^{\dagger} \right) \notag\\
&\quad+ \sqrt{p_c (1 - p_c)} e^{i \theta_c} |1\rangle \langle 0| \bigotimes_{i=1}^N \left( \sum_{j=0}^3 K_j^T \rho_i K_j^{\dagger} \right) \notag\\
&\quad+ (1 - p_c) |1\rangle \langle 1| \bigotimes_{i=1}^N \left( \sum_{j=0}^3 K_j^T \rho_i {K_j^T}^{\dagger} \right).  \label{eq:NoisyProbeoutput_withChannels_allStatesame}
\end{align}
Defining $\Tr \left( \sum_j K_j^{T} \rho_i K_j^{\dagger} \right) = r_i(\bm\nu) e^{i f_i(\bm\nu)}$ with $\bm\nu=\{\theta,\vec s,\hat n,q\}$ as the updated set $\bm\nu$, outcome probabilities of measurement in the $\sigma_x$-basis on the control qubit can be written as
\begin{align}
\nonumber&p(\pm|\theta)&\\\nonumber &= \frac{1}{2} \pm \sqrt{p_c (1 - p_c)}  \cos \left( \sum_{i=1}^N f_i(\bm\nu) + \theta_c \right)\prod_{i=1}^N r_i(\bm\nu).\\ \label{eq:Probplusminus_controlqubit_channel}
\end{align}
Following Eq.~\eqref{eq:cfi}, we calculate FI, $\mathcal F_\theta^q$ in NITDM  which have exactly same mathematical form as Eq.~\eqref{eq:FI_singleparam_contolqubitmeasure}, i.e., $\mathcal F_\theta^q\equiv\mathcal F_\theta$ with updated $r_i(\bm\nu)=\abs{\Tr \left( \sum_j K_j^{T} \rho_i K_j^{\dagger} \right)}$ and $f_i(\bm\nu)=\arg{\left[\Tr \left( \sum_j K_j^{T} \rho_i K_j^{\dagger} \right)\right]}$. With the tools in hand, we are now ready to analyze the effect of noisy encoding. 

\subsection{Comparison of NITDM with ``switched'' and regular strategy in single qubit $N=1$}
In the case of $N=1$, the value of FI  corresponding to NITDM is given by (see Appendix. \ref{app:nonoptimal_single})
\begin{eqnarray}
    \mathcal F_\theta^q = \frac{n_2^4 q^2\sin^2\theta}{1-\left\{\frac{1+q}{2}-n_2^2 q(1-\cos\theta)\right\}^2},
\label{eq:PhaseEstimation_ControlFI_Noise_TimeFlip}
\end{eqnarray}
which is independent of the input probe configuration. Being periodic in $\theta,$ we can safely choose $\theta\in[0,2\pi)$ for our analysis. We compare NITDM with two other metrological schemes -- $(1)$ ``switched'' strategy, where encoding is done using the quantum switch as described in Ref.~\cite{chapeau2021noisy}, $(2)$ regular strategy \cite{chapeau2021noisy} where conventional unitary encoding is used. The ``switched'' FI, $\mathcal{F}_\theta^{q,S}$ and QFI, $\mathcal{Q}_\theta^{q,R}$ in a regular  strategy are given by $\mathcal{F}_\theta^{q,S}={{\tilde q}^2\sin^4\theta}\left[1-\left\{\tilde q\cos\theta+\frac{1}{4}(1+q)^2\right\}^2\right]^{-1}$ with $\tilde q=q(1-q)$, and $\mathcal{Q}_\theta^{q,R} = q^2(\vec{n} \cross \vec{s})^2$ respectively. Let us now compare these three strategies in terms of their metrological power.

\subsubsection{Fixing optimally encoded direction} 

Notice that $\mathcal{F}_\theta^q$ is optimal for encoding axis $\hat n=\hat y$ (see Appendix.~\ref{app:nonoptimal_single}) where optimal QFI for regular probe $\mathcal{Q}_\theta^{q,R} = q^2$ is attained  by choosing $\hat s\perp \hat n$. Due to the $\theta$-dependency of the FI in the NITDM, there exist specific regions of $\theta$ where this protocol demonstrates superior performance compared to the regular one, as well as regions where it lags behind. Specifically, $\mathcal{F}_\theta^q|_{\hat n=\hat y}$ provides TF  advantage compared to $\mathcal{Q}_\theta^{q,R}|_{\hat s\perp \hat n} = q^2$ in the parameter region $\pi/3 < \theta < 
4 \tan^{-1}\left(\sqrt{q_1 - 
4 \sqrt{q_2}}\right)$ and $ 
4 \tan^{-1}\left(\sqrt{q_1 + 
4 \sqrt{q_2}}\right)<\theta<5\pi/3$ where $q_1=(5 + 7q)/(3 + q)$ and $q_2 = (1 + 4q + 3q^2)/(3 + q)^2$. In contrast, in the regime $\cos^{-1}\left[\frac{-1 + q}{2(1 + q)}\right] < \theta < 2\pi - \cos^{-1}\left[\frac{-1 + q}{2(1 + q)}\right]$; $0\leq \theta<\pi/3$ and $5\pi/3<\theta<2\pi/3$, we have $\mathcal{F}_\theta^q|_{\hat n=\hat y} <\mathcal{Q}_\theta^{q,R}|_{\hat s\perp \hat n}$, i.e., regular strategy outperforms NITDM. 

Note that these ranges of $\theta$ are reliant on the strength of the noise parameter $q$. Interestingly, there exist intervals of $\theta$ where NITDM surpasses regular strategy irrespective of the strength of the noise. Our analysis reveals that, $\mathcal{F}_\theta^q|_{\hat n=\hat y} >\mathcal{Q}_\theta^{q,R}|_{\hat s\perp \hat n}\,\forall q$ when $\{\pi/3<\theta<\pi/2;3\pi/2<\theta<5\pi/3\}$ while $\mathcal{F}_\theta^q|_{\hat n=\hat y} <\mathcal{Q}_\theta^{q,R}|_{\hat s\perp \hat n}\,\forall q$ for $\{2\pi/3<\theta<4\pi/3;0\leq \theta<\pi/3;5\pi/3<\theta<2\pi/3\}$. Moreover, setting $\hat n=\hat y$ as the encoding axis, we always obtain the benefit over the switched scheme, i.e., $\mathcal{F}_\theta^q|_{\hat n=\hat y}\geq \mathcal F_\theta^{q,S}$ (Appendix.~\ref{app:nonoptimal_single}) for any value of $\theta$ and arbitrary noise strength $q$.

\textit{Arbitrary probe state.} Since $\mathcal{Q}_\theta^{q,R}$ is dependent on probe states, choosing nonoptimal probe states can reduce the performance of regular strategy whereas NITDM bypasses the probe dependency and illustrates supremacy as shown in Fig.~\ref{fig:singlequbit}.  In particular, we illustrate the advantage of time-flip by choosing probe state $\rho$ with $s_x=0.8,s_y=0.6$, $\mathcal F_\theta^q$ can be shown to outperform both $\mathcal F_\theta^{q,S}$ and $\mathcal{Q}_\theta^{q,R}$ with $\theta=\pi/4$.
\begin{figure}[h]
    \centering
    \includegraphics[width=0.45\textwidth]{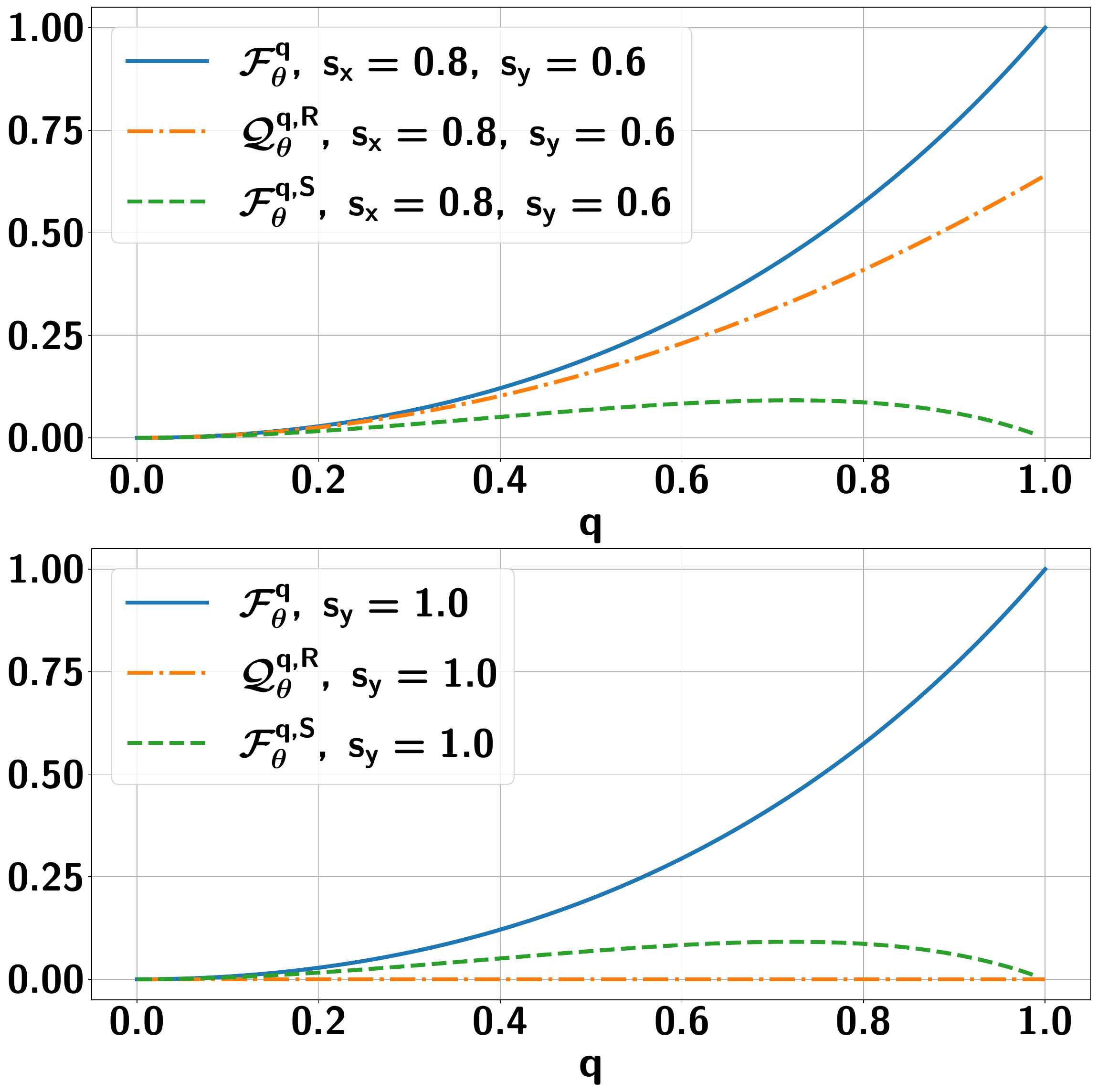}
    \caption{\textbf{Fisher information (vertical axis) in the case of time-flipped, \(F_\theta^q\), ``switched'', \(\mathcal F_\theta^{q,S}\) and regular protocols \(\mathcal{Q}_\theta^{q,R}\). with respect to noise parameter, $q$ (horizontal axis).}  Choosing $\hat n=\hat y$ with $\theta=\pi/4$, FI for NITDM, $\mathcal F_\theta^q$ (blue lines) always outperforms switched FI, $\mathcal F_\theta^{q,S}$ (green lines) which are independent of probe state. $\mathcal F_\theta^q$ is higher than $\mathcal{Q}_\theta^{q,R}$ (orange lines) with both $\hat s=[0.8,0.6,0]$ and $\hat s=\hat y$ probe state which are not optimal for regular strategy. All axes are dimensionless.}
    \label{fig:singlequbit}
\end{figure}
Interestingly, even when the probe is in a maximally mixed state or orthogonal to $\hat n$, $\mathcal F_\theta^q$ is always positive (except for $q=0$ and $\theta\neq 2m\pi\forall \pi\in\mathbb Z$) where $\mathcal{Q}_\theta^{q,R}=0$. Moreover, the dominance of NITDM over switched and regular strategies exists in the low noise region even with nonoptimal encoding as illustrated in Appendix.~\ref{app:nonoptimal_single}.

\subsubsection{$\theta$-averaged performance}
Due to the $\theta$-dependence of  $\mathcal{F}_{\theta}^q$ and $\mathcal F_{\theta}^{q,S},$ we perform averaging over $\theta$, i.e.,
\begin{eqnarray}
    \langle \mathcal{F}_\theta^{q}\rangle &=& \int_{0}^{2\pi}\frac{1}{2\pi}\mathcal  F_\theta^q d\theta \nonumber\\ &=& 1 -\frac{1}{4}\sqrt{(3+q)(3+q-4q n_2^2)} \nonumber\\ && \quad-\frac{1}{4}\sqrt{(1-q)(1-q+4qn_2^2)},
\end{eqnarray}
and  $\langle \mathcal{F}_\theta^{q,S}\rangle = 1-\frac{\sqrt 3}{8}(1-q)\sqrt{(1-q)(3+5q)}-\frac{1}{8}\sqrt{(5+6q-3q^2)(5-2q+5q^2)}$ \cite{chapeau2021noisy}. Given the noise strength $q$, although $\langle \mathcal{F}_\theta^{q}\rangle$ and $\langle \mathcal{F}_\theta^{q,S}\rangle$ are both less than the optimal regular QFI, $\mathcal{Q}_\theta^{q,R}|_{\hat n=\hat y}=q^2=\langle \mathcal{Q}_\theta^{q,R}|_{\hat n=\hat y}\rangle$, in the range $\pm n_2^{\min}\leq n_2\leq \pm1$, the ``flipped'' strategy is better than the ``switched'' one, i.e., $\langle \mathcal{F}_\theta^{q}\rangle>\langle \mathcal{F}_\theta^{q,S}\rangle$. Here, $n_2^{\min} = n_2^0[1 - q^g]^{h^{-1}}$ represents the $\chi^2$-fitted curve, where $g = 1.14147 \pm (0.48\%)$ and $h = 2.38455 \pm (0.25\%)$ are fitting parameters (see Appendix.~\ref{app:sing_avg}), and $n_2^0 = 0.945742$ denotes the value of $n_2^{\min}$ in the complete depolarizing scenario. This signifies another distinct advantage of the quantum time flip strategy over the switched one. In particular, in the presence of high noises, flip always outperforms switch when $\pm 0.945742\leq n_2\leq \pm1$, and when noise starts to decrease, this range gets increased with the noise parameter, making flip highly favorable in low noise regimes. Moreover, since the FI with switch method does not depend on $n_2$, for $\pm n_2^{\min}(q)\leq n_2\leq \pm1,$ the FI obtained with flip is strictly higher than the highest possible FI with the switch protocol at various levels of noise $q$. Because $n_2$ can be controlled,  we can obtain higher average FI values with time-flip for any value of $q$ compared to the switch-based scheme.

\subsection{Advantage in multiparty noisy case}

Going beyond the single qubit scenario, we examine how multiparty scenario gets affected by noise. While the optimal strategy for the multiqubit NITDM falls outside the scope of this study, we provide a brief overview of its key aspects. Note that while $\mathcal{F}_\theta^q$ is independent of the input probe for $N = 1$, this dependence emerges when $N \geq 2$. By assuming  $\theta=\pi/4, n_2=1$, we initialize the probe state to be $\ket{+i}^{\otimes N}$ and evaluate $\mathcal{F}_\theta^q$ (given by Eq.~\eqref{eq:FI_singleparam_contolqubitmeasure}) for different values of noise content $q$. Let us compare NITDM (by fixing $\theta=\pi/4$) with the regular strategy when all the $N$-qubit state is initialized along $\hat s=\hat y$, which is the optimal product probes in the noiseless case. In this case, $\mathcal Q^{q,R}_{\theta}=Nq^2$ while the behavior of $\mathcal{F}_{\theta = \frac{\pi}{4}}^q$ is depicted in Fig.~\ref{fig:multiqubit}. Due to $N^2$ term in Eq.~\eqref{eq:FI_singleparam_contolqubitmeasure}, $\mathcal F_\theta$ increases, while the exponential term is responsible for loss of Fisher information with $N$ as shown in Fig.~\ref{fig:multiqubit}. Specifically, depending on the system configuration and noise content, $\mathcal F_\theta^q$ reaches maximum at a certain value of $N$. However, FI per qubit,  $\mathcal F_\theta^q/N$, is maximized, where the first slope-change of the envelope of $\mathcal F_\theta^q$ occurs.
\begin{figure}[h]
    \centering
    \includegraphics[width=0.45\textwidth]{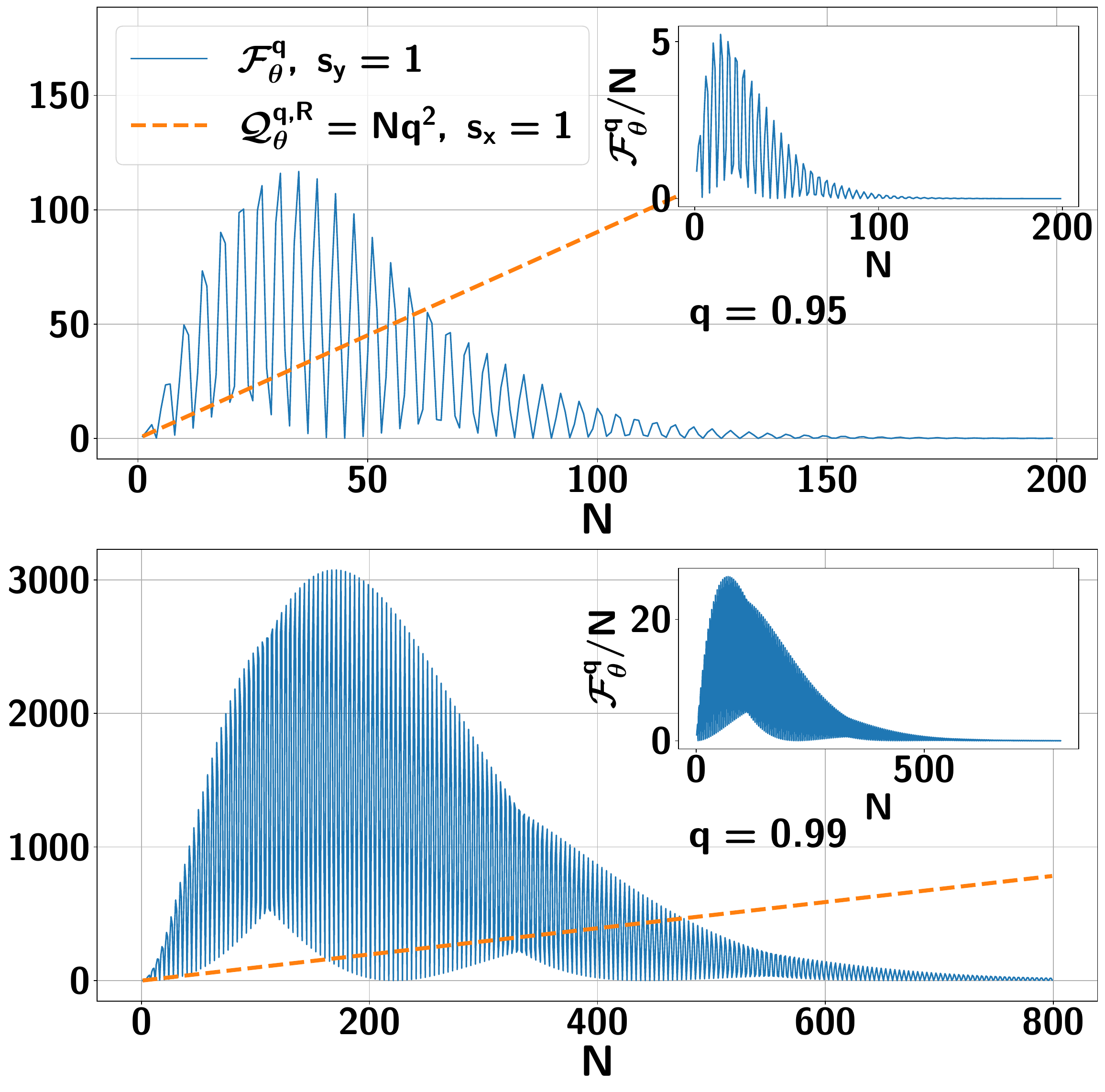}
    \caption{\textbf{Fisher information (ordinate) with respect to number or qubits $N$ (abscissa) for different values of noise parameter $q$. }  Fisher information for time-flip, $\mathcal{F}_\theta^q$ (blue lines) shows oscillatory behavior and after reaching the maximum, it decays. Here, we fix  the parameter values $\theta = \pi/4$ and $\hat n_2=1$. Contrastingly, QFI  for the regular case, $\mathcal{Q}_\theta^{(q,R)}$ (orange dotted lines) increases linearly. Inset: Fisher information per qubit $\mathcal{F}_\theta^q/N$ (ordinate, blue lines) against number or qubits $N$ (abscissa). All axes are dimensionless.}
    \label{fig:multiqubit}
\end{figure}
According to this, for a given $\theta,\vec s,$  and $\hat n$, we define the set $\{\mathcal{\tilde F}_\theta^q, \tilde N\}$  where $\mathcal{\tilde F}_\theta^q$ is the value of FI  of $\tilde N$-qubit NITDM protocol corresponding to the maximized value of $\mathcal F_\theta^q/N$  at $\tilde N$. Given a high value of $N\gg\tilde N,$  we perform  $\tilde N$-qubit NITDM and repeat the procedure for $\lfloor {N}/{\tilde N}\rfloor$ times to get an advantage over regular additive strategy with $N$-qubit product probe states. Note that if $N<\tilde N$, we have to perform $N$-qubit NITDM to get higher value of FI over the regular strategy.

To demonstrate the above claim, we find that maximum $\mathcal{F}_{\theta = \frac{\pi}{4}}^{q = 0.99} = 3074.72$ and $\mathcal{F}_{\theta = \frac{\pi}{4}}^{q = 0.95} = 116.77,$ which are achieved at \(  N = 170 \) and \( N = 34 \) in the NITDM scheme respectively. On the other hand, the maximum values of \( \mathcal{F}_\theta^q/N \) are $27.46$ and $5.51$,  achieved at $\{\mathcal{\tilde F}_\theta^q, \tilde N\}  = \{ 1894.5, 69\}$ and \( \{ 49.68, 9\} \) respectively. 
Let us elaborate on the benefit of NITDM by discussing two cases -- \textbf{Case 1.} For a fixed number of qubits, say $N = 1000$ and $q = 0.95$, employing the NITDM with $\tilde N = 9$ qubits for $111$ repetitions yields an effective Fisher information of $111\tilde{\mathcal{Q}}_\theta^{q=0.95}=111 \times 49.68 = 5514.48$, surpassing the QFI achievable with a standard $1000$-qubit system under the same noise conditions, $\mathcal Q^{(q=0.95,R)}_{\theta} = Nq^2 = 1000 \times 0.95^2 = 902.5$. \textbf{Case 2.} For $q=0.99$, we can perform NITDM  with $\tilde N=69$ repeatedly for $14$ times to get the best possible result $14\mathcal{F}_\theta^{q=0.99}=26523\gg\mathcal Q^{(q=0.99,R)}_{\theta} = 980.1$.

This shows a notable performance gain by NITDM compared to the regular strategy in a low-noise regime, although it is for a specific parameter $\theta$. However, our study provides evidence that this advantage persists when averaged over the parameter space of $\theta$ as previously defined, which we discuss in Appendix. \ref{app:avg_multiqubit}.

\section{Conclusion}
\label{sec:conclusion}

We incorporated the concept of an indefinite time directed encoding process in the framework of quantum metrology to enhance the precision limit of parameter estimation beyond the standard quantum limit (SQL). To accomplish the higher precision than the SQL, referred to as the Heisenberg limit (HL), the conventional parameter estimation strategy relies on an entangled input probe, along with unitary encoding and multi qubit  measurement  on the  encoded state for decoding. 
In contrast, we demonstrated that achieving the HL is possible by employing unentangled input probes with discrete energy levels provided indefinite time  directed encoding (ITDM) is performed.  Specifically, we proved analytically that the symmetric product probe state is sufficient to reach Heisenberg scaling among all possible multiparty input states although  
entangled input probe states can increase the precision of parameter estimation compared to  the separable probe ones.
Further, in certain cases involving product probes, the bidirectionally encoded state remains a product state, indicating that entanglement is not generated throughout the protocol even after the global time-flip operation and, therefore, is unnecessary. 
Fixing the measurement  on the control qubit, we illustrated that in the limit of small parameter values, both the input probe and quantum Fisher information (QFI) which provides a lower bound on the precision are independent of the parameters involved. We also examined the impact of noise on a single-qubit scenario and assessed its advantages in certain cases over the ``switched'' (based on indefinite causal order) and conventional metrology protocols. Additionally, we discussed strategies for achieving higher Fisher information in noisy multiqubit scenarios, in comparison to the regular scheme.

It is important to note that a quantum time flip operation cannot be realized within the standard quantum circuit model since no such circuit can transform an unknown unitary gate to its transpose \cite{chiribella2022quantum,Quintino2019}. However,  it was  demonstrated~\cite{guo2024experimental,stromberg2024experimental} that the implementation of time-flip operations on photonic platforms can be achieved by leveraging device-dependent symmetries inherent to their experimental apparatus. Specifically, the target qubit, subject to the time flip operation,  is encoded in the polarization degree of freedom of light, while its spatial modes serve the purpose of the control qubit. Forward and backward temporal directions are achieved by transmitting photons through appropriately configured waveplates, following the proper Stokes-parameter convention \cite{Frigo2022}. Consequently, given $N$ queries of the black boxes, $U_\theta$ (where $\theta$ is the parameter to be estimated), the ITDM scheme should be within experimental reach. Further, this can be used to validate the Heisenberg limit of root mean square error utilizing a bidirectional encoding process without entangled probes, which are typically fragile against noise.

Our results pave the way for the metrological utilization of quantum time flips in the discrete-variable regime. Going forward, it will be fascinating to explore the behavior of  QFI  in the presence of noise using other estimation procedures such as parallel, adaptive, and indefinite-causal-order approaches
~\cite{liu2023optimal,mothe2024reassessing,Liu2024Dec} in conjunction with indefinite time direction. 
To determine whether time flip or more generic indefinite time directed \cite{chiribella2022quantum}  techniques outperform typical quantum comb or indefinite-causal-order systems, their hierarchies must be rigorously classified and evaluated possibly using semidefinite optimization techniques \cite{liu2023optimal,Kurdziaek2023,mothe2024reassessing,Liu2024Dec,Kurdziaek2025}.

\acknowledgements
We acknowledge the use of \href{https://github.com/titaschanda/QIClib}{QIClib} -- a modern C++ library for general purpose quantum information processing and quantum computing and cluster computing facility at Harish-Chandra Research Institute. PH acknowledges ``INFOSYS
scholarship for senior students''.

\bibliographystyle{quantum}
\bibliography{reference}

\onecolumn
\appendix
\section{Calculation of QFI for a arbitrary fully separable pure state}
\label{app:QFI_calc}

By using the form of pure product probe states, given in Eq.~\eqref{eq:PureStateOutput} of the main text, we obtain
\begin{eqnarray}
    \nonumber \dot{\ket{\Phi}}= \diff{\ket\Phi}{\theta} &=& \sqrt{p_c}\ket{0}\otimes\sum_{i=1}^N\bigg(\dot{\ket{\chi_i}}\bigotimes_{\substack{j=1\\j\neq i}}^N\ket{\chi_j}\bigg) + e^{i\theta_c}\sqrt{1-p_c}\ket{1}\otimes\sum_{i=1}^N\bigg(\dot{\ket{\omega_i}}\bigotimes_{\substack{j=1\\j\neq i}}^N\ket{\omega_j}\bigg).
\end{eqnarray}
Therefore, we can write the inner product of $\dot{\ket{\Phi}}$ with itself as
\begin{eqnarray}
    \langle \dot{\Phi} | \dot{\Phi} \rangle &=& p_c \left( \sum_{k=1}^N \langle \dot{\chi}_k | \dot{\chi}_k \rangle + \sum_{\substack{j,k=1\\ j \neq k}}^N \langle \dot{\chi}_k | \chi_k \rangle \langle \chi_j | \dot{\chi}_j \rangle \right) + (1 - p_c) \left( \sum_{k=1}^N \langle \dot{\omega}_k | \dot{\omega}_k \rangle + \sum_{\substack{j,k=1\\ j \neq k}}^N \langle \dot{\omega}_k | \omega_k \rangle \langle \omega_j | \dot{\omega}_j \rangle \right).\notag\\ \label{eq:QFI_firstterm_purestate1}
\end{eqnarray}
On the other hand, we have 
\begin{eqnarray}
    \langle \dot{\Phi} | \Phi \rangle = p_c\sum_{k=1}^N\langle \dot{\chi}_k | {\chi}_k \rangle + (1-p_c)\sum_{k=1}^N\langle \dot{\omega}_k | {\omega}_k \rangle.
    \label{eq:phidpdi}
\end{eqnarray}
Defining
\begin{align}
     & \quad \langle \dot{\chi}_k | \chi_k \rangle = \alpha_{k}, \quad \langle \dot{\chi}_k | \dot{\chi}_k \rangle = | \dot{\chi}_k |, \notag\\
& \quad \langle \dot{\omega}_k | \omega_k \rangle = \beta_{k}, \quad \langle \dot{\omega_k} | \dot{\omega_k} \rangle = |\dot{\omega_k} |,
\end{align}
we can rewrite Eqs.~\eqref{eq:QFI_firstterm_purestate1} and \eqref{eq:phidpdi} as
\begin{align}
\langle \dot{\Phi} | \dot{\Phi} \rangle & = p_c \left( \sum_{k=1}^N| \dot{\chi}_k | + \sum_{\substack{j,k=1\\j \neq k}}^N \alpha_{k}\alpha_{j}^{*} \right) \notag + (1-p_c) \left( \sum_{k=1}^N |\dot{\omega_k} | + \sum_{\substack{j,k=1\\j \neq k}}^N \beta_{k}\beta_j^{*} \right), \label{eq:QFI_firstterm_purestate2}
\end{align}
and
\begin{eqnarray}
    \langle \dot{\Phi} | \Phi \rangle =  p_c \sum_{k=1}^N \alpha_{k} + (1 - p_c) \sum_{k=1}^N \beta_{k},
    \label{eq:QFI_secondterm_purestate0}
\end{eqnarray}
respectively. Finally, we calculate the QFI, corresponding to an arbitrary product probe state as shown in  Eq.~\eqref{eq:symmproveeqnQ0} of main text.

\section{Proof of Theorem~\ref{th1}}
\label{ap:th1}
\begin{IEEEeqnarray}{rCl}
\nonumber&&\mathcal{Q}_{\text{avg}} - \mathcal{Q}^A 
    \\&=& 4 \Bigg[ p_c \left( \sum_{k =1}^N |\dot{\chi_k}| + (N-1) \sum_{k =1}^N \alpha_k \alpha_k^{*} \right)  + (1 - p_c) \left( \sum_{k =1}^N |\dot{\omega_k}| + (N-1) \sum_k \beta_k \beta_k^{*} \right) - N \sum_{k =1}^N \Big| p_c \alpha_k + (1 - p_c) \beta_k \Big|^2\Bigg] \notag\\
    &&- 4 \Bigg[ p_c \left( \sum_{k =1}^N |\dot{\chi_k}| + \sum_{k =1}^N \sum_{j \neq k}^N \alpha_k \alpha_j^{*} \right)  + (1 - p_c) \left( \sum_{k =1}^N |\dot{\omega_k}| + \sum_{k =1}^N \sum_{j \neq k}^N \beta_k \beta_j^{*} \right)  -  \bigg| p_c \sum_{k =1}^N\alpha_k + (1 - p_c) \sum_{k =1}^N\beta_k \bigg|^2\Bigg] \notag\\
    &=& 4 \Bigg[ p_c \left( (N-1) \sum_{k =1}^N \alpha_k \alpha_k^{*} - \sum_{k =1}^N \sum_{j \neq k}^N \alpha_k \alpha_j^{*} \right)  + (1 - p_c) \left( (N-1) \sum_{k =1}^N \beta_k \beta_k^{*} - \sum_{k =1}^N \sum_{j \neq k}^N \beta_k \beta_j^{*} \right) \notag \\
    && + \bigg| p_c \sum_{k =1}^N\alpha_k + (1 - p_c) \sum_{k =1}^N\beta_k \bigg|^2 - N \sum_{k =1}^N \Big| p_c \alpha_k + (1 - p_c) \beta_k \Big|^2 \Bigg] \notag \\
    &=& 2p_c \sum_{k =1}^N\sum_{j=1}^N \left( \alpha_k \alpha_k^{*} + \alpha_j \alpha_j^{*} - \alpha_k \alpha_j^{*} - \alpha_j \alpha_k^{*} \right)  + 2(1 - p_c) \sum_{k =1}^N\sum_{j=1}^N \left( \beta_k \beta_k^{*} + \beta_j \beta_j^{*} - \beta_k \beta_j^{*} - \beta_j \beta_k^{*} \right) \notag \\
    && + 4 \sum_{k =1}^N\sum_{j=1}^N \left( p_c^2 \alpha_k \alpha_j^{*} + (1 - p_c)^2 \beta_k \beta_j^{*}  + p_c (1 - p_c) \beta_k \alpha_j^{*}  + p_c (1 - p_c) \alpha_k \beta_j^{*}\right) \notag \\\nonumber
    && - 4N \sum_{k =1}^N \left( p_c^2 \alpha_k \alpha_k^{*} + (1 - p_c)^2 \beta_k \beta_k^{*}+ p_c (1 - p_c) \alpha_k \beta_k^{*} + p_c (1 - p_c) \beta_k \alpha_k^{*} \right)
\end{IEEEeqnarray}

\begin{eqnarray}
   \nonumber &=& 4N p_c (1 - p_c)  \sum_{k =1}^N \alpha_k \alpha_k^{*} + 4N p_c (1 - p_c)  \sum_{k =1}^N \beta_k \beta_k^{*}  - 4 p_c (1 - p_c) \sum_{k =1}^N\sum_{j=1}^N \alpha_k \alpha_j^{*} - 4 p_c (1 - p_c) \sum_{k =1}^N\sum_{j=1}^N \beta_k \beta_j^{*} \notag \\\nonumber
    &&+ 4 p_c (1 - p_c) \sum_{k =1}^N\sum_{j=1}^N \alpha_k \beta_j^{*} + 4 p_c (1 - p_c) \sum_{k =1}^N\sum_{j=1}^N \beta_k \alpha_j^{*}  - 4N p_c (1 - p_c) \sum_{k =1}^N \alpha_k \beta_k^{*} - 4N p_c (1 - p_c) \sum_{k =1}^N \beta_k \alpha_k^{*}\\
    &=& 4N p_c (1 - p_c) \left( \sum_{k =1}^N\alpha_k \alpha_k^{*} + \sum_{k =1}^N \beta_k \beta_k^{*} - \sum_{k =1}^N \alpha_k \beta_k^{*} - \sum_{k =1}^N \beta_k \alpha_k^{*} \right) \notag \\
    && + 4 p_c (1 - p_c) \left( \sum_{k =1}^N\sum_{j=1}^N \alpha_k \beta_j^{*} + \sum_{k =1}^N\sum_{j=1}^N \beta_k \beta_j^{*}  - \sum_{k =1}^N\sum_{j=1}^N \alpha_k \alpha_j^{*} - \sum_{k =1}^N\sum_{j=1}^N \beta_k \beta_j^{*} \right) \nonumber\\
    &=& 4N p_c (1 - p_c) \sum_{k =1}^N \left| \alpha_k -\beta_k \right|^2 - 4 p_c (1 - p_c) \left| \sum_{k =1}^N \alpha_k -\beta_k\right|^2 \geq 0
\end{eqnarray}
where the last inequality is due to Cauchy–Schwarz inequality.

\section{Proof of $A+B\leq 1$  in Theorem~\ref{th:opt_product}}
\label{app:a+b}
The sum of coefficients $A$ and $B$ is calculated as
\begin{align}
    \nonumber &A+B \\\nonumber &= 1-n_3^2(1-2p_s)^2+4\{n_1\cos{\theta_s} + n_2(2p_c-1)\sin{\theta_s}\}\bigg[n_3(1-2p_s)\sqrt{p_s(p_s-1)}\{n_1\cos{\theta_s+n_2(2p_c-1)\sin{\theta_s}}\}\bigg].\\
\end{align}
To maximize $A+B$ with respect to the parameter $p_c$ we calculate 
\begin{eqnarray}
    \diffp{{(A+B)}}{{p_c}}=8 n_2 \sin{\theta_s} \bigg[ 
    n_3 (1 - 2 p_s) \sqrt{p_s(1-p_s)} 
    - 2 p_s (1 - p_s)  \big\{ n_1 \cos{\theta_s} + n_2 (-1 + 2 p_c) \sin{\theta_s} \big\} 
\bigg].
\end{eqnarray}
Solving $\diffp{{(A+B)}}{{p_c}}=0$ we get 
\begin{eqnarray}
    p_c=\frac{1}{2}  - \frac{1}{4n_2}\left\{
        2 n_1 \cot \theta_s + \frac{n_3 (-1 + 2 p_s) \csc \theta_s}{\sqrt{p_s(1-p_s)}}
    \right\}=p_c^o, 
\end{eqnarray}
which is the only critical point. Now, the $2$nd order derivative is given by 
\begin{eqnarray}
    \diffp{{(A+B)}}{{{p_c}^2}} = -32 n_2^2 p_s (1-p_s) \sin^2{\theta_s}.
    \label{eq:dd_a+b}
\end{eqnarray}
From Eq.~\eqref{eq:dd_a+b} it is clear that for $-1\leq n_2\leq 1$, $0<p_s<1$ and $\theta_s\neq2m\pi$ with $m\in\mathbb Z$, i.e., for all valid region of $n_1,n_2,n_3,p_s,\theta_s,p_c$ where $A\neq 0$, we have $\diffp{{(A+B)}}{{{p_c}^2}}<0$. This implies that $A+B$ is strictly concave in $p_c$ and $A+B|_{p_c=p_c^o}=1$ is the local maximum. Therefore, the global maximum value of $A+B$ is given by $A+B|_{p_c=p_c^o}=1$.

\section{Axis Estimation with ITDM}
\label{app:axis}
The $2\times2$ unitary matrix is given by
\begin{eqnarray}
U&=&\exp\left\{{-i \frac{\theta}{2}\big(\sin\phi \cos\xi~\sigma_x+ \sin\phi \sin\xi~\sigma_y+\cos\phi~\sigma_z\big)}\right\},\nonumber\\
&=&\begin{bmatrix}
\cos\left(\frac{\theta}{2}\right) - i \cos\phi \sin\left(\frac{\theta}{2}\right) & 
-i \cos\xi \sin\phi \sin\left(\frac{\theta}{2}\right) - \sin\phi \sin\xi \sin\left(\frac{\theta}{2}\right) \\
-i \cos\xi \sin\phi \sin\left(\frac{\theta}{2}\right) + \sin\phi \sin\xi \sin\left(\frac{\theta}{2}\right) & 
\cos\left(\frac{\theta}{2}\right) + i \cos\phi \sin\left(\frac{\theta}{2}\right)
\end{bmatrix},
\end{eqnarray}
where $\theta$ is the phase and $\phi,\xi$ are the parameters of the axis of rotation of the unitary matrix. We aim to estimate $\phi$. According to corollary~\ref{cor1}, we initialize the input state as $\ket{\Psi^S}=\ket{\psi_c}\ket{\psi}^{\otimes N}$ where $\ket{\psi}=\sqrt{p_s}\ket{0} + e^{i\theta_s}\sqrt{1-p_s}\ket{1}$ with $\theta_s\in[0,2\pi]$ and $p_s\in[0,1]$ and $\ket{\psi_c} = \sqrt{p_c} \ket{0} + e^{i \theta_c} \sqrt{1 - p_c} \ket{1}$. In this scenario, we find that

\begin{eqnarray}
\alpha = \langle \dot{\chi}|\chi\rangle &=& i \Bigg[ (1/2 -  p_s) \sin\phi \sin\theta  \sqrt{p_s(1-p_s)} \cos\theta_s  \Big( \sin\xi - \cos\theta \sin\xi \
+ \cos\phi \cos\xi \sin\theta \Big) \nonumber \\
&& \qquad +  \sqrt{p_s(1-p_s)} \sin\theta_s \Big( -\cos\xi +\cos\theta \cos\xi 
+ \cos\phi \sin\xi \sin\theta \Big) \Bigg],\nonumber\\
\beta = \langle \dot{\omega}|\omega\rangle &=& i \Bigg[ (1/2 -  p_s) \sin\phi \sin\theta+  \sqrt{ps(1-ps)} \cos\theta_s  \Big( \cos\theta \sin\phi -\sin\xi + \cos\phi \cos\xi \sin\theta \Big) \nonumber \\
&&  \qquad + \sqrt{ps(1-ps)} \sin\theta_s \Big( -\cos\xi + \cos\theta \cos\xi - \cos\phi \sin\xi \sin\theta \Big) \Bigg],\nonumber\\
|\dot{\chi}| = \langle \dot{\chi}|\dot{\chi}\rangle &=& \sin^2{\frac{\theta}{2}}, \quad |\dot{\omega}| = \langle \dot{\omega}|\dot{\omega}\rangle= \sin^2{\frac{\theta}{2}},\nonumber
\end{eqnarray}
where the dot represents the derivative with respect to $\phi$. This gives a QFI of \(\mathcal{Q}_\phi =AN^2 +BN,\) where
\begin{eqnarray}
A =  4  p_c(1-p_c)|\alpha-\beta|^2, \quad
B = 4 \Big[ p_c \Big(| \dot{\chi} | - |\alpha|^2 \Big) + (1 - p_c) \Big( | \dot{\omega} | - |\beta|^2 \Big) \Big].
\end{eqnarray}
By optimizing $A$ with respect to the state and unitary parameters we get ${\displaystyle\max_{p_c,p_s,\theta_s,\xi,\theta}{A} = 4}$ at $p_c = 1/2,~ p_s = 1/2,~ \theta_s = \pi,~ \xi = \pi/2,~ \theta = \pi $ which in turn gives $B=0$. Thus, we get a quantum Fisher information of $\mathcal Q_\phi=4N^2$. It is easy to see that for any $AN^2 +BN$ and $CN^2+DN$ with $A \geq C$ implies $AN^2+BN \geq CN^2+DN$ for all positive integer value of $N$ iff $A+B \geq C+D$. Numerically optimizing $A+B$ indeed gives us a value of 4. Therefore the particular parameters employed provide us with the maximum value of QFI, $\mathcal{Q}_\phi=4N^2$ with optimal pure product states and optimal unitary parameters. In a similar manner, we can also calculate that optimal $\mathcal Q_\xi$ scales as $N^2$.

\section{Spectrum of $U_\theta^\dagger U_\theta^T$}
\label{app:udut}
$U_\theta^\dagger U_\theta^T$ is a unitary matrix, satisfying $U_\theta^\dagger U_\theta^T\ket{u(\hat n,\theta)^{\pm}} = \exp\left({\pm i \mathrm{f}(n_2, \theta)}\right)\ket{u(\hat n,\theta)^{\pm}}$. Here, $\mathrm f(n_2,\theta)$ is given by 
\begin{eqnarray}
    \mathrm f(n_2,\theta)=\tan^{-1}\left(\frac{\sqrt{4n_2^2\sin^2{\frac{\theta}{2}}\left(1-n_2^2 \sin^2{\frac{\theta}{2}}\right)}}{1-2n_2^2\sin^2{\frac{\theta}{2}}}\right).
\end{eqnarray}
The eigenvectors can be expressed as
\begin{eqnarray}
    \ket{u(\hat n,\theta)^{\pm}}=\frac{\mathrm g^\mp(\hat n, \theta)\ket{0}+\ket{1}}{\sqrt{1+|\mathrm g^\mp(\hat n, \theta)|^2}},
\end{eqnarray}
where 
\begin{eqnarray}
    \mathrm g^\mp(\hat n, \theta) = \frac{2n_1n_2 \sin^2{\frac{\theta}{2}}\mp\sqrt{4n_2^2\sin^2{\frac{\theta}{2}}\left(1-n_2^2 \sin^2{\frac{\theta}{2}}\right)}}{-2n_2n_3\sin^2{\frac{\theta}{2}}-in_2\sin\theta}.
\end{eqnarray}

\section{FI for small values of $\theta$}
\label{app:opt_probe_control}
For small values of $\theta=\tilde \theta,$ we consider up to $1$st order correction, i.e., by ignoring $\mathcal{O}(\theta^2)$. This approximates Eqs.~\eqref{eq:r_nu} and ~\eqref{eq:f_nu} in the main text as $r(\bm\nu)\approx 1 - n_2^2\tilde\theta^2(1-s_y^2)/2$ and $f(\bm\nu) \approx n_2s_y\tilde\theta - n_2\tilde\theta^2(n_3s_x - n_1s_z)/2$. Consequently, we have $\dot r(\bm\nu) = n_2^2 \tilde\theta(1-s_y^2)$ and $\dot f(\bm\nu) = n_2s_y - n_2\tilde\theta(n_3 s_x - n_1 s_z)$. Putting these values alongwith $\theta_c\neq 0$ in Eq.~\eqref{eq:FI_singleparam_contolqubitmeasure} in the main text  we get
\begin{eqnarray}
\mathcal F_{\tilde \theta} &\approx& N^2 \bigg(1 - \frac{n_2^2(1-s_y^2)\tilde \theta^2}{2}\bigg)^{2N-2} \Bigg[ -\tilde \theta n_2^2(1-s_y^2) \cos \Big( N\{ n_2s_y\tilde \theta + \mathcal{O}(\theta^2)\}+\theta_c\Big) -\bigg(1 - \frac{n_2^2(1-s_y^2)\tilde \theta^2}{2}\bigg) \nonumber\\
&&\nonumber \times\bigg(n_2s_y + \mathcal{O}(\theta)\bigg) \sin \Big( N \{n_2s_y\tilde \theta+\mathcal{O}(\theta^2)\} +\theta_c\Big)   \Bigg]^2 \Bigg/\Bigg[1 -  \bigg(1 - \frac{n_2^2(1-s_y^2)\tilde \theta^2}{2}\bigg)^{2N}\\\nonumber
&&\times\cos^2 \Big( N\{ n_2s_y\tilde \theta + \mathcal{O}(\theta^2)\}+\theta_c\Big)\Bigg], \nonumber\\
&\approx& N^2\Bigg[ -\tilde \theta n_2^2(1-s_y^2)\Big(\cos{\theta_c}-\{Nn_2s_y\tilde \theta+\mathcal{O}(\theta^2)\}\sin{\theta_c}\Big) -\bigg(1 - \frac{n_2^2(1-s_y^2)\tilde \theta^2}{2}\bigg)\bigg(n_2s_y +\mathcal{O}(\theta)\bigg) \nonumber\\ 
&&\times \bigg(\sin\theta_c+N\{n_2s_y\tilde \theta+\mathcal{O}(\theta^2)\}\cos{\theta_c}-\frac{\sin\theta_c}{2}N^2 \{n_2s_y\tilde \theta+\mathcal{O}(\theta^2)\}^2\bigg)\Bigg]^2 \Bigg/\Bigg[1 -  \bigg(1 - Nn_2^2(1-s_y^2)\tilde \theta^2\bigg) \nonumber\\
&&\times \bigg(\cos^2\theta_c - N\{n_2s_y\tilde \theta+\mathcal{O}(\theta^2)\}\sin2\theta_c-N^2\{n_2s_y\tilde \theta+\mathcal{O}(\theta^2)\}^2\cos2\theta_c\bigg)\Bigg], \nonumber\\
&\approx& \frac{N^2n_2^2s_y^2\sin^2\theta_c}{1-\cos^2\theta_c} = s_y^2n_2^2N^2.  
\label{eq:controlqubit_FI_theta0_thetacnonzero}
\end{eqnarray}

On the other hand, with $\theta_c=0$ we have
\begin{eqnarray}
    \mathcal F_{\tilde \theta} &\approx& N^2 \bigg(1 - \frac{n_2^2(1-s_y^2)\tilde \theta^2}{2}\bigg)^{2N-2} \Bigg[ -\tilde \theta n_2^2(1-s_y^2) \cos \Big( N n_2s_y\tilde \theta + \mathcal{O}(\theta^2)\Big) -\bigg(1 - \frac{n_2^2(1-s_y^2)\tilde \theta^2}{2}\bigg) \nonumber\\
&& \times\bigg(n_2s_y + \mathcal{O}(\theta)\bigg) \sin \Big( N n_2s_y\tilde \theta+\mathcal{O}(\theta^2) \Big)   \Bigg]^2 \Bigg/\Bigg[1 -  \bigg(1 - \frac{n_2^2(1-s_y^2)\tilde \theta^2}{2}\bigg)^{2N}\cos^2 \Big( N\{ n_2s_y\tilde \theta + \mathcal{O}(\theta^2)\}\Big)\Bigg], \nonumber\\
&\approx& \frac{N^2 \left( -\tilde \theta n_2^2(1-s_y^2) -Nn_2^2s_y^2\tilde \theta  - \mathcal{O}(\theta^2)\right)^2 }{1 -  (1 - Nn_2^2(1-s_y^2)\tilde \theta^2)(1-N^2n_2^2s_y^2\tilde \theta^2)}, \nonumber\\
&\approx& \frac{N^2n_2^4\tilde \theta^2 \left( (1-s_y^2) +Ns_y^2 \right)^2 }{Nn_2^2\tilde \theta^2((1-s_y^2)+Ns_y^2)} = Nn_2^2\left( (1-s_y^2) +Ns_y^2 \right) = s_y^2n_2^2N^2 +(1-s_y^2)n_2^2N.
\label{eq:controlqubit_FI_theta0_thetaczero}
\end{eqnarray}
Comparing Eqs.~\eqref{eq:controlqubit_FI_theta0_thetacnonzero} and ~\eqref{eq:controlqubit_FI_theta0_thetaczero}, it is evident that $\theta_c=0$ is the optimal scenario. However, $\theta_c\neq 0$ scenario also attains Heisenberg scaling.

\section{Proof of Theorem~\ref{th_ent}}
\label{app:th2}
A general $N$-qubit state can be written as 
\begin{eqnarray}
    \ket{\Psi} = \sum_{i=1}^{2^N} \Lambda_i\ket{\lambda_i},
\end{eqnarray}
where $(i-1)$ is the decimal equivalent number of the $i$th basis vector $\ket{\lambda_i}$, represented in binary number. For example, the decimal equivalent number of the basis state $\ket{\lambda_2}=\ket{\bm 0}^{\otimes N-1}\otimes \ket{\bm 1}$ is $1$, where $\ket{\bm{0}}$ and $\ket{\bm 1}$ is just the representation of qubit basis. Decomposing $\ket{\lambda_i}$ as $\bigotimes_{j=1}^N\ket{\psi^i_{j}}$ where $\ket{\psi_j^i}$ is the $i$th basis state for $j$th qubit with $\braket{\psi_j^i}{\psi_j^k}=\delta_{ik}$ and $\sum_{i=1}^2\ketbra{\psi_j^i}{\psi_j^i}=I_2$ we rewrite
\begin{eqnarray}
    \ket{\Psi} = \sum_{i=1}^{2^N} \Lambda_i \bigotimes_{j=1}^N\ket{\psi^i_{j}}.
\end{eqnarray}
The encoded state can be written as
\begin{eqnarray}
    \ket{\Phi} = \sum_{i=1}^{2^N} \Lambda_{i} \left( \sqrt{p_c} \ket{0} \bigotimes_{j=1}^{N} \ket{\chi^i_{j}} + e^{i \theta_c} \sqrt{1 - p_c} \ket{1} \bigotimes_{j=1}^{N} \ket{\omega^i_{j}} \right),
\label{eqn:entangled_output}
\end{eqnarray}
where $\ket{\chi_j^i} = U_\theta\ket{\psi^i_j}$ and $\ket{\omega_j^i} = U_\theta^T\ket{\psi^i_j}$. To calculate the QFI, we find the derivative of Eq.~\eqref{eqn:entangled_output} above with respect to parameter $\theta$ which is given by
\begin{eqnarray}
\dot{\ket \Phi} = \sum_{i=1}^{2^N} \Lambda_i \Bigg[ \sqrt{p_c} |0 \rangle \otimes \bigg(\sum_{m=1}^N | \dot\chi^i_m \rangle \bigotimes_{\substack{j=1 j \neq m}}^N | \chi^i_j \rangle  \bigg)\notag 
\nonumber + e^{i \theta_c} \sqrt{1 - p_c} |1 \rangle \otimes  \bigg(\sum_{m=1}^N | \dot\omega^i_m \rangle \bigotimes_{\substack{j=1\\j\neq m}}^N | \omega^i_j \rangle  \bigg)  \Bigg].
\end{eqnarray}
Therefore, we have 
\begin{eqnarray}
    \langle \dot{\Phi} | \dot{\Phi} \rangle &=& \sum_{i=1}^{2^N}\sum_{k=1}^{2^N} \Lambda_i\Lambda_k^* \Bigg[ p_c  \sum_{m=1}^N \Big\{ \langle \dot\chi^k_{m} | \dot\chi^i_{m} \rangle \prod_{j \neq m} \langle \chi^k_{j} | \chi^i_{j} \rangle \notag  + \sum_{l \neq m} \langle \dot\chi^k_{m} | \chi^i_{m} \rangle \langle \chi^k_{l} | \dot\chi^i_{l} \rangle \prod_{j \neq m, l} \langle \chi^k_{j} | \chi^i_{j} \rangle\Big\} \notag \\ \nonumber &&+ (1-p_c) \sum_{m=1}^N \Big\{ \langle \dot\omega^k_{m} | \dot\omega^i_{m} \rangle \prod_{j \neq m} \langle \omega^k_{j} | \omega^i_{j} \rangle \notag  + \sum_{l \neq m} \langle \dot\omega^k_{m} | \omega^i_{m} \rangle \langle \omega^k_{l} | \dot\omega^i_{l} \rangle \prod_{j \neq m, l} \langle \omega^k_{j} | \omega^i_{j} \rangle \Big\} \Bigg], \\ \nonumber &=&\sum_{i=1}^{2^N}\sum_{k=1}^{2^N}\Lambda_i\Lambda_k^* \Bigg[ p_c \sum_{m=1}^{N} \Big( \dot\chi_{m}^{ki} \prod_{j \neq m} a_{j}^{ki} 
+ \sum_{l \neq m} \alpha_{m}^{ki} \alpha_{l}^{ik *} \prod_{j \neq m, l} a_{j}^{ki} \Big)\\ \nonumber &&+ (1 - p_c)\sum_{m=1}^{N} \Big( \dot\omega_{m}^{ki} \prod_{j \neq m} b_{j}^{ki} 
+ \sum_{l \neq m} \beta_{m}^{ki} \beta_{l}^{ik *} \prod_{j \neq m, l} b_{j}^{ki} \Big) \Bigg].
\label{eq:QFI_firstterm_entangledprobe}
\end{eqnarray}
where 
\begin{eqnarray}
\langle \dot\chi^k_{m} | \dot\chi^i_{m} \rangle &=& \dot\chi_{m}^{ki}, \quad \langle \chi^k_{j} | \chi^i_{j} \rangle = a_{j}^{ki}, \nonumber\\
\langle \dot\omega^k_{m} | \dot\omega^i_{m} \rangle &=& \dot\omega_{m}^{ki}, \quad \langle \omega^k_{j} | \omega^i_{j} \rangle = b_{j}^{ki}, \nonumber\\
\langle \dot\chi^k_{m} | \chi^i_{m} \rangle &=& \alpha^{ki}_{m}, \quad \langle \dot\omega^k_{j} | \omega_{j}^i \rangle = \beta^{ki}_{j}. 
\end{eqnarray}
Now, $|\langle \dot{\Phi} | {\Phi} \rangle|^2$ being a positive number we have
\begin{eqnarray}
    \nonumber\mathcal{Q}_{\theta} &\leq& 4 \sum_{i=1}^{2^N}\sum_{k=1}^{2^N} \Lambda_i\Lambda_k^* \Bigg[ p_c \sum_{m=1}^{N} \Big( \dot\chi_{m}^{ki} \prod_{j \neq m} a_{j}^{ki} 
+ \sum_{l \neq m} \alpha_{m}^{ki} \alpha_{l}^{ik *} \prod_{j \neq m, l} a_{j}^{ki} \Big)\\ \nonumber &&+ (1 - p_c) \sum_{m=1}^{N} \Big( \dot\omega_{m}^{ki} \prod_{j \neq m} b_{j}^{ki} 
+ \sum_{l \neq m} \beta_{m}^{ki} \beta_{l}^{ik *} \prod_{j \neq m, l} b_{j}^{ki} \Big) \Bigg], \\ \nonumber &&= {\tilde{\mathcal{Q}}_\theta}.
\label{eq:generalQFI_entangledprobe_appx1}
\end{eqnarray}
To simplify $\mathcal{\tilde Q}_\theta,$ we observe that the terms $\prod_{j \neq m} a_{j}^{ki} $, and $\prod_{j \neq m} b_{j}^{ki} $ are the products of the inner products of $N-1$ of the qubits. Since $U_\theta$ and $U_\theta^T$ are both unitaries they preserve inner products, i.e., $a_j^{ki}=b_j^{ki}=\braket{\psi^k_j}{\psi^i_j}=\delta_{ki}$. Hence, given a value of $m$, $\prod_{j \neq m} a_{j}^{ki} = \prod_{j \neq m} b_{j}^{ki}=0$, if the qubits in $\ket{\lambda_i}$ and $\bra{\lambda_k}$ differ in more than one position. Similarly, if the qubits in $\ket{\lambda_i}$ and $\bra{\lambda_k}$ differ in more than two positions then the terms $\prod_{j \neq m, l} b_{j}^{ki}=\prod_{j \neq m, l} a_{j}^{ki}=0$ for a given pair of values $m$ and $l$. 

We define $\mathcal{\tilde{Q}}_{\theta}^{ X}$, which involves contribution from $N$-qubit basis states where qubits differ in  $X$ number of positions only. Therefore, we can write $\mathcal{\tilde{Q}}_{\theta}=\sum_{X=0}^N\mathcal{\tilde{Q}}_{\theta}^{ X}$. However, from the above discussions, we have $\mathcal{\tilde{Q}}_{\theta}^{ X}=0$ $\forall X\geq 3$ which implies $\mathcal{\tilde{Q}}_{\theta}=\sum_{X=0}^2\mathcal{\tilde{Q}}_{\theta}^{ X}$. So we have to calculate only $\mathcal{\tilde{Q}}_{\theta}^{0},\mathcal{\tilde{Q}}_{\theta}^{1}$ and $\mathcal{\tilde{Q}}_{\theta}^{2}$.

\begin{itemize}
    \item \textbf{Upper bound of  $\mathcal{\tilde{Q}}_{\theta}^{2}$:} A pair of $N$-qubit basis states, $\ket{\lambda_i}$ and $\ket{\lambda_j}$  which differ in exactly two qubits is given by the condition $j = i \oplus 2^k \oplus 2^l \text{ with } k \neq l$. Now, we can write
    \begin{IEEEeqnarray}{rCl}
        \nonumber\tilde{\mathcal{Q}}_\theta^{2} &=& 4\sum_{i = 1}^{2^N}\sum_{k=1}^{N}\sum_{\substack{l=1 \\ l \neq k}}^{N}\Lambda_i\Lambda_{i\oplus 2^k\oplus 2^l}^*\Bigg[p_c\bigg(\alpha^{i\oplus 2^k, i}_k\alpha^{i, i\oplus d^l *}_l + \alpha^{i\oplus d^l, i}_l\alpha^{i, i\oplus 2^k *}_k \bigg)\\\nonumber &&+(1-p_c)\bigg(\beta^{i\oplus 2^k, i}_k\beta^{i, i\oplus d^l *}_l + \beta^{i\oplus d^l, i}_l\beta^{i, i\oplus 2^k *}_k \bigg)\Bigg].
    \end{IEEEeqnarray}

    Since for any complex number, $z,$ we have $z+z^*\leq 2|z|$ and $|\sum_iz_i|\leq \sum_i|z_i|$, we can write
    \begin{eqnarray}
        \tilde{\mathcal{Q}}_\theta^{2} &\leq& 4\sum_{i = 1}^{2^N}\sum_{k=1}^{N}\sum_{\substack{l=1 \\ l \neq k}}^{N}|\Lambda_i\Lambda_{i\oplus 2^k\oplus 2^l}^*|p_c\bigg(\bigg|\alpha^{i\oplus 2^k, i}_k\alpha^{i, i\oplus 2^l *}_l\bigg| + \bigg|\alpha^{i\oplus 2^l, i}_l\alpha^{i, i\oplus 2^k *}_k\bigg| \bigg)\nonumber \\ &+& 4\sum_{i = 1}^{2^N}\sum_{k=1}^{N}\sum_{\substack{l=1 \\ l \neq k}}^{N}|\Lambda_i\Lambda_{i\oplus 2^k\oplus 2^l}^*|(1-p_c)\bigg(\bigg|\beta^{i\oplus 2^k, i}_k\beta^{i, i\oplus 2^l *}_l\bigg| + \bigg|\beta^{i\oplus 2^l, i}_l\beta^{i, i\oplus 2^k *}_k \bigg|\bigg).
    \end{eqnarray}
    Now, if $\max|\alpha^{i\oplus 2^k, i}_k|=\mathcal A$ and $\max|\beta^{i\oplus 2^k, i}_k|=\mathcal B$, where maximization is done over $\ket{\lambda_i},\bra{\lambda_{i\oplus2^k}}$ and parameters of the unitary by which encoding is performed, then 
    \begin{eqnarray}
        \tilde{\mathcal{Q}}_\theta^{2} &\leq& 8(p_c\mathcal{A}^2 + (1-p_c)\mathcal{B}^2) \sum_{k=1}^{N}\sum_{\substack{l=1 \\ l \neq k}}^{N}\Bigg|\sum_{i = 1}^{2^N}\Lambda_i\Lambda_{i\oplus 2^k\oplus 2^l}^*\Bigg|.
        \label{eq:ab}
    \end{eqnarray}
    Defining $\vec\Lambda=\{\Lambda_1, \Lambda_2,\ldots,\Lambda_N\}$ as the coefficient vector and $\{k,l\}$-permuted coefficient vector as $\vec\Lambda_{kl}=\{\Lambda_{1\oplus2^k\oplus2^l}, \Lambda_{2\oplus2^k\oplus2^l},\ldots,\Lambda_{N\oplus2^k\oplus2^l}\}$ we can write $\Bigg|\sum_{i = 1}^{2^N}\Lambda_i\Lambda_{i\oplus 2^k\oplus d^l}^*\Bigg|=| \vec{\Lambda_{k,l}}\cdot\vec{\Lambda}| \leq \sqrt{|\vec{\Lambda} \cdot \vec{\Lambda}||\vec{\Lambda_{k,l}} \cdot \vec{\Lambda_{k,l}}|} = 1$. This implies
    \begin{eqnarray}
    \tilde{\mathcal{Q}}_\theta^{2} &\leq& 8(p_c\mathcal{A}^2 + (1-p_c)\mathcal{B}^2) \sum_{k=1}^{N}\sum_{\substack{l=1 \\ l \neq k}}^{N}1 = 8(p_c\mathcal{A}^2 + (1-p_c)\mathcal{B}^2)N(N-1).
    \label{eq:ab2}
    \end{eqnarray}

    \item \textbf{Upper bound of  $\mathcal{\tilde{Q}}_{\theta}^{1}$:} A pair of $N$-qubit basis states, $\ket{\lambda_i}$ and $\ket{\lambda_j}$  which differ in exactly one qubit is characterized by the condition $j = i \oplus 2^k$ which simplifies ${\mathcal{\tilde Q}_\theta}^{1}$ as
    \begin{eqnarray}
        {\mathcal{\tilde Q}_\theta}^{1} &=&4\sum_{i = 1}^{2^N}\sum_{k = 1}^{N}\Lambda_{i}\Lambda_{i \oplus 2^k}^*\Bigg[ p_c\bigg\{\dot\chi^{i \oplus 2^k, i}_k + \sum_{l \neq k} \alpha^{i\oplus 2^k, i}_k\alpha^{i, i\oplus 2^k *}_l +\sum_{m \neq k} \alpha^{i\oplus 2^k, i}_m\alpha^{i, i\oplus 2^k *}_k\bigg\} \nonumber\\
        && + (1-p_c)\bigg\{\dot\omega^{i \oplus 2^k, i}_k + \sum_{l \neq k} \beta^{i\oplus 2^k, i}_k\beta^{i, i\oplus 2^k *}_l+\sum_{m \neq k} \beta^{i\oplus 2^k, i}_m\beta^{i, i\oplus 2^k *}_k\bigg\}\Bigg].\\
        &\leq&4\Bigg[\sum_{i = 1}^{2^N}\sum_{k = 1}^{N}|\Lambda_{i}\Lambda_{i \oplus 2^k}^*| p_c\Bigg|\dot\chi^{i \oplus 2^k, i}_k\Bigg| + \sum_{i = 1}^{2^N}\sum_{k = 1}^{N}|\Lambda_{i}\Lambda_{i \oplus 2^k}^*| p_c\sum_{l \neq k} \Bigg|\alpha^{i\oplus 2^k, i}_k\alpha^{i, i\oplus 2^k *}_l\Bigg| \nonumber\\
        &&  + \sum_{i = 1}^{2^N}\sum_{k = 1}^{N}|\Lambda_{i}\Lambda_{i \oplus 2^k}^*| p_c\sum_{m \neq k} \Bigg|\alpha^{i\oplus 2^k, i}_m\alpha^{i, i\oplus 2^k *}_k\Bigg| + \sum_{i = 1}^{2^N}\sum_{k = 1}^{N}|\Lambda_{i}\Lambda_{i \oplus 2^k}^*| (1-p_c)\bigg|\dot\omega^{i \oplus 2^k, i}_k\Bigg| \nonumber\\\nonumber
        && + \sum_{i = 1}^{2^N}\sum_{k = 1}^{N}|\Lambda_{i}\Lambda_{i \oplus 2^k}^*| (1-p_c)\sum_{l \neq k} \Bigg|\beta^{i\oplus 2^k, i}_k\beta^{i, i\oplus 2^k *}_l\Bigg|  + \sum_{i = 1}^{2^N}\sum_{k = 1}^{N}|\Lambda_{i}\Lambda_{i \oplus 2^k}^*| (1-p_c)\sum_{m \neq k} \Bigg|\beta^{i\oplus 2^k, i}_m\beta^{i, i\oplus 2^k *}_k\Bigg|\Bigg].\\
    \end{eqnarray}

    Similar to Eqs.~\eqref{eq:ab} and \eqref{eq:ab2}, we can write 
    \begin{eqnarray}
    \tilde{\mathcal{Q}_\theta}^{(1)} &\leq& 4N(p_c\mathcal{X}+(1-p_c)\mathcal{W}) + 8(p_c\mathcal{A}^2 +(1-p_c)\mathcal{B}^2)N(N-1),
    \end{eqnarray}
    where $\mathcal X$ and $\mathcal{W}$ are upper bounds of $\left|\dot\chi^{i \oplus 2^k, i}_k\right|$ and $\left|\dot\omega^{i \oplus 2^k, i}_k\right|$ respectively. 

    \item \textbf{Upper bound of  ${\mathcal{\tilde Q}_\theta}^{0}$: } Since the basis states which differ in no position is that basis state itself,  the value of ${\mathcal{\tilde Q}_\theta}^{0}$ is upper bounded by 
    \begin{eqnarray}
        \tilde{\mathcal{Q}_\theta}^{0} &\leq& \bigg(A'N^2 +B'N\bigg) \sum_{i=1}^{2^N}\Lambda_i\Lambda_i^* = \bigg(A'N^2 +B'N\bigg), 
    \end{eqnarray}
    where $A'N^2+B'N$ is the upper bound of $\langle{\dot \Phi}|{\dot \Phi}\rangle$ for a 
    $N$-qubit product-probe state as evident from Sec.~\ref{sec:product_probe_itdm} in the main text. 
\end{itemize}
Finally, we have 
\begin{eqnarray}
    {\mathcal{\tilde Q}_\theta}={\mathcal{\tilde Q}_\theta}^{0}+{\mathcal{\tilde Q}_\theta}^{1}+{\mathcal{\tilde Q}_\theta}^{2}\leq \gamma N^2+\zeta N
\end{eqnarray}
where both $\gamma$ and $\zeta$ are some function of $\mathcal X, \mathcal W, \mathcal A, \mathcal B, A',B'$ only which are independent of $N$. Hence, we prove that no entangled state can beat Heisenberg scaling in ITDM. Note that, from the above calculation, it trivially follows that this scaling is also true for general $N$-qudit states.

\section{Demonstration of advantage in $N=1$ for nonoptimal encoding}
\label{app:nonoptimal_single}
In the case of NITDM by choosing $\rho_i=\rho\forall i$ as probe state and measuring in the $\{\ket +,\ket -\}$ basis on the control qubit, the FI is given by
\begin{eqnarray}
    \mathcal{F}_\theta=\frac{N^2  r^{2N-2}(\bm\nu) \left\{ \dot{r}(\bm\nu) \cos \left( N f(\bm\nu) + \theta_c \right) - r(\bm\nu) \dot{f}(\bm\nu) \sin \left( N f(\bm\nu) + \theta_c \right) \right\}^2 }{1 -  r^{2N}(\bm\nu) \cos^2 \left( N f(\bm\nu) + \theta_c \right)}.
    \label{eq:nitdm}
\end{eqnarray}
where $r(\bm\nu)=\abs{\Tr \left( \sum_j K_j^{T} \rho K_j^{\dagger} \right)}$ and $f(\bm\nu)=\arg{\left[\Tr \left( \sum_j K_j^{T} \rho K_j^{\dagger} \right)\right]}$. In case of $N=1$, Eq.~\eqref{eq:nitdm} reduces to
\begin{eqnarray}
    \mathcal{F}_\theta=\frac{\left\{ \dot{r}(\bm\nu) \cos \left(  f(\bm\nu) + \theta_c \right) - r(\bm\nu) \dot{f}(\bm\nu) \sin \left(  f(\bm\nu) + \theta_c \right) \right\}^2 }{1 -  r(\bm\nu) \cos^2 \left(  f(\bm\nu) + \theta_c \right)} = \frac{\Big\{\frac{d}{d\theta}\Big( r(\bm\nu) \cos \left(  f(\bm\nu) + \theta_c \right) \Big)\Big\}^2}{1-\Big(r(\bm\nu) \cos \left(  f(\bm\nu) + \theta_c \right)\Big)^2}
\label{eq:PhaseEstimation_noise_controlmeasure_plusstate_singleprobe}
\end{eqnarray}
From 
\begin{eqnarray}
\nonumber r(\bm\nu)=\abs{\Tr \left( \sum_j K_j^{T} \rho K_j^{\dagger} \right)} = \Bigg|\frac{1+q}{2}-n_2^2q(1-\cos\theta) +iq n_2\Big((1-\cos\theta)(n_1 r_3 - r_1 n_3)+ r_2\sin\theta \Big)\Bigg|,\\
\label{eq:Polarform_QuantumTimeFlip_phaseestimation_evaluation}
\end{eqnarray}
we have
\begin{eqnarray}
r(\bm\nu) \cos \left(  f(\bm\nu) + \theta_c \right) = \Re\left\{\Tr \left( \sum_j K_j^{T} \rho K_j^{\dagger} \right)\right\}  = \frac{1+q}{2}-n_2^2q(1-\cos\theta).
\label{eq:PolarformRealPart_QuantumTimeFlip_phaseestimation}
\end{eqnarray}
Putting Eq.~\eqref{eq:PolarformRealPart_QuantumTimeFlip_phaseestimation} in Eq.~\eqref{eq:PhaseEstimation_noise_controlmeasure_plusstate_singleprobe} we get 
\begin{eqnarray}
    \mathcal F_\theta^q = \frac{n_2^4 q^2\sin^2\theta}{1-\left\{\frac{1+q}{2}-n_2^2 q(1-\cos\theta)\right\}^2}.
\label{eq:PhaseEstimation_ControlFI_Noise_TimeFlip}
\end{eqnarray}

\textbf{Proof of optimality of $\hat n_2=\hat y$ and comparison with ``switched'' strategy.} Note that $\mathcal{F}_\theta^q$ can be written as
\begin{eqnarray}
    \nonumber\mathcal F_\theta^q = \frac{n_2^4 q^2\sin^2\theta}{1-\left[\left(\frac{1+q}{2}\right)^2+n_2^2 \{n_2^2q^2(1-\cos\theta)^2-q(1+q)(1-\cos\theta)\}\right]}.
    \label{eq:n2_N1}
\end{eqnarray}
From Eq.~\eqref{eq:n2_N1} it is clear that the numerator is maximized at $n_2=\pm 1$ where the denominator is minimized. Hence, $n_2=\pm 1$ is the optimal encoding axis in this scenario.
\begin{figure}[H]
    \centering
    \includegraphics[width=0.5\textwidth]{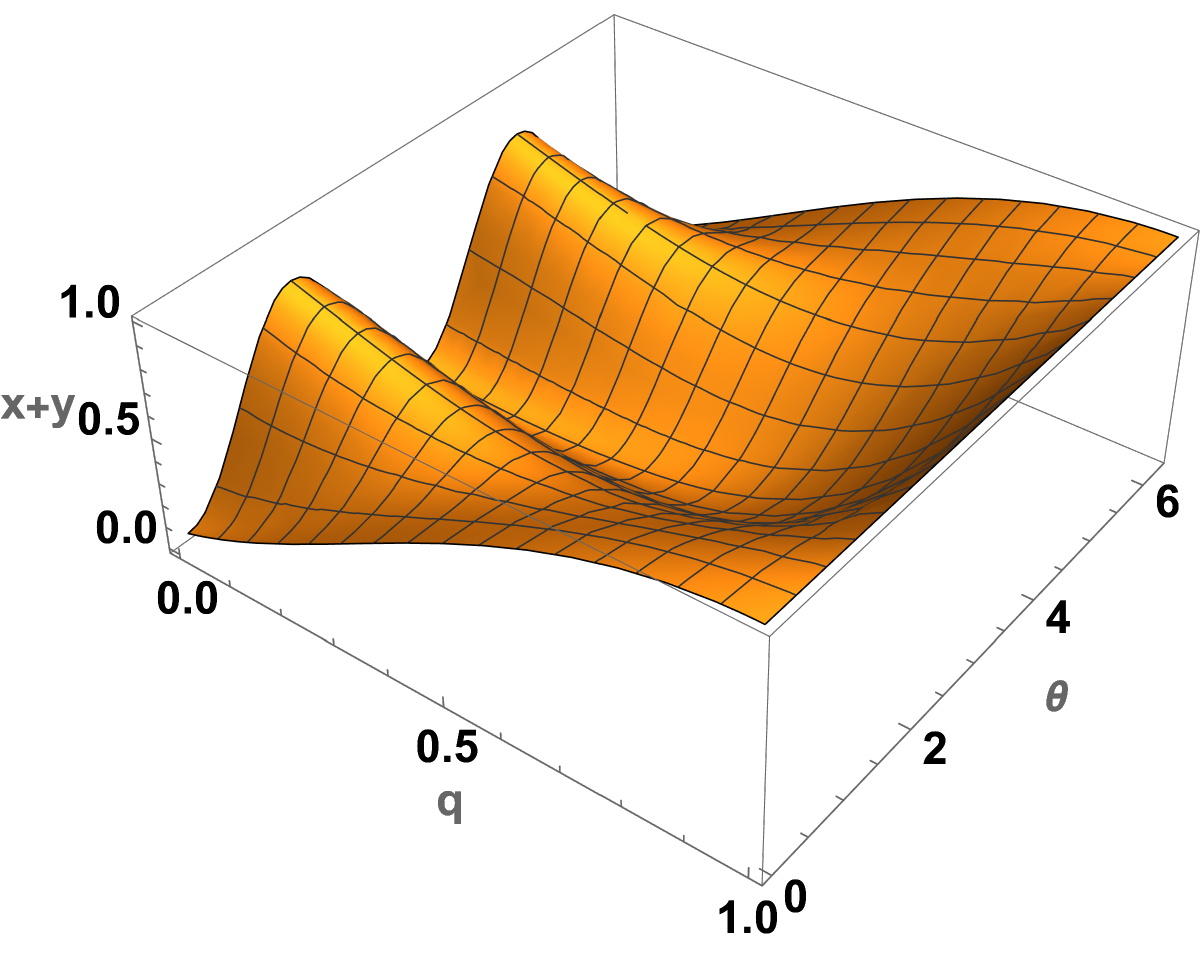}
    \caption{\textbf{Value of $x+y$ (ordinate) against noise strength $q$ (abscissa), and paramater $\theta$ (abscissa)} The figure demonstrates the value of $x+y$ never goes above 1. All axes are dimensionless.}
    \label{fig:fgs}
\end{figure} 
To compare with ``switched'' strategy let us evaluate the difference between $\mathcal F_\theta^q|_{\hat n=\hat y}$ and the ``switched'' strategy, defined as $\delta = \mathcal F_\theta^q|_{\hat n=\hat y}-\mathcal F_\theta^{q,S} $. 
Now, $\delta$ can be explicitly calculated as $\delta= 1-x-y$ where $x=\frac{1}{16} \left\{ (1 + q)^2 - 4(-1 + q)q \cos\theta \right\}^2$ and $y=(1 - q)^2 \sin^2\theta \left\{1 - \frac{1}{4} \left( -1 + q - 2q \cos\theta \right)^2 
\right\} $. Our numerical investigation suggests (see Fig.~\ref{fig:fgs}) that $x+y\leq 1$ which proves $\delta\geq 0\,\forall q,\theta$.

\textbf{Comparing NITDM with ``switched'' and regular strategy with non-optimal encoding. }Here, we demonstrate that, beyond the optimal encoding procedure, NITDM can also be advantageous in certain scenarios even under non-optimal encoding (see Fig.~\ref{fig:singlequbitnitdm_adv}), outperforming both the ``switched'' and regular strategies. Specifically, when selecting \(\hat{n} = [0.8, 0.6, 0]\) instead of \(n_2 = 1\), NITDM proves to be more effective than the ``switched'' strategy in the region \(q \gtrsim 0.73\), corresponding to the low-noise regime. 
\begin{figure}[H]
    \centering
    \includegraphics[width=0.5\textwidth]{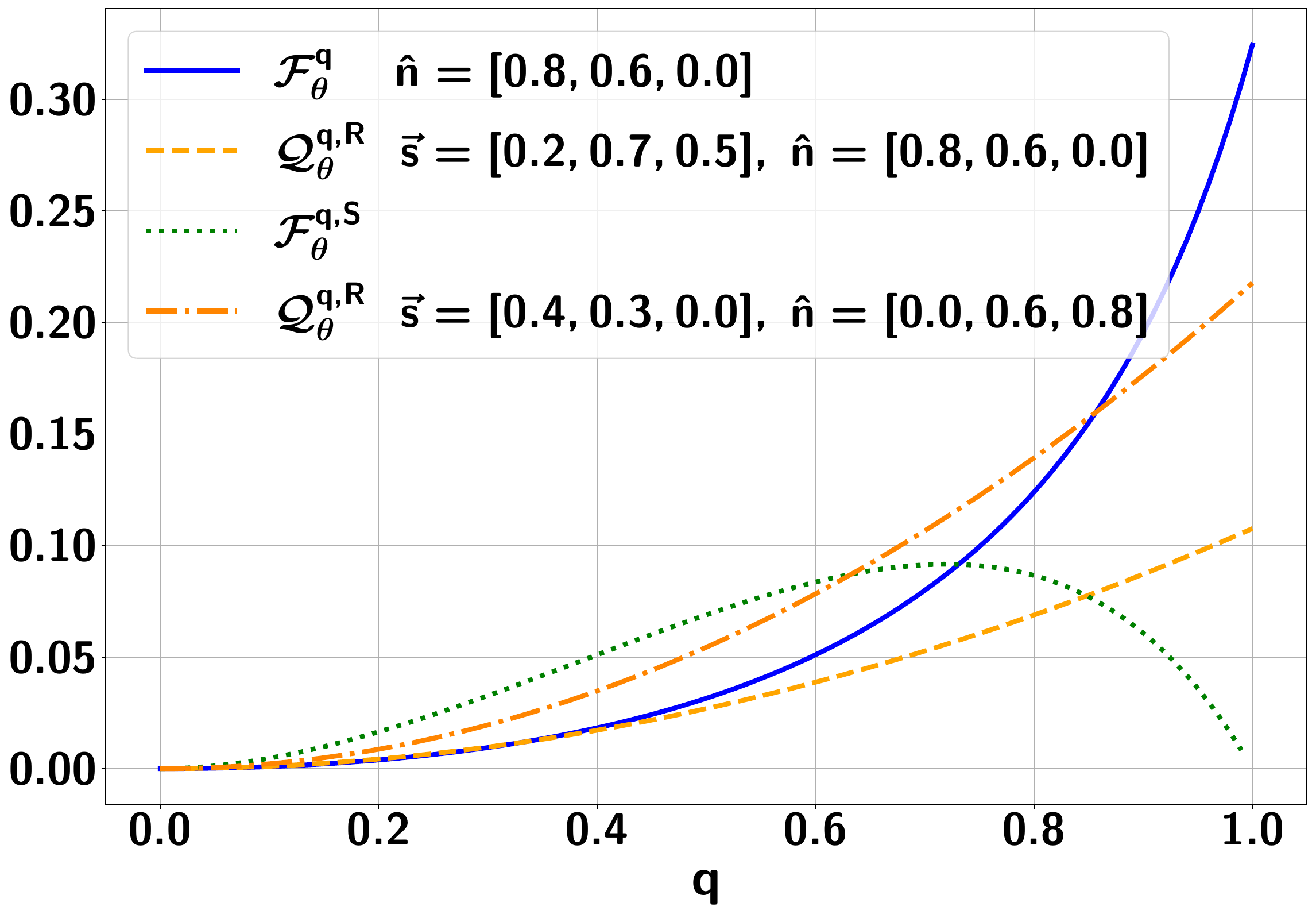}
    \caption{\textbf{Fisher information, $\mathcal F_\theta^q,\mathcal F_\theta^{q,S}$ and QFI, $\mathcal Q_\theta^{q,R}$ (ordinate) with respect to noise parameter $q$ (abscissa). Comparison between Flip, Switch, and regular strategies for $n_2 \neq 1$.} We choose $\theta = \pi/4$. The figure demonstrates the advantage of time flip over ``switched'' and regular strategies in non-ideal encoding scenarios. All axes are dimensionless.}
    \label{fig:singlequbitnitdm_adv}
\end{figure}
Conversely, maintaining \(n_2\) constant while varying \(n_1\), \(n_3\), and the probe state can reduce the advantage of the time-flip approach compared to regular strategy, as illustrated in Fig.~\ref{fig:singlequbitnitdm_adv}. While for $\hat n=[0.8,0.6,0]$ and $\hat s=[0.2,0.7,0.5]$ NITDM always excels regular strategy, in case of $\hat n=[0,0.6,0.8]$ and $\hat s=[0.4,0.3,0.0]$ only for $q\gtrsim 0.85$ $\mathcal F_\theta^q$ is higher. Nevertheless, in many cases within the low-noise regime, NITDM exhibits superior performance.

\section{$\theta$-averaged performance of single qubit-NITDM}
\label{app:sing_avg}
Since both $\mathcal F_\theta^q$ and $\mathcal F_\theta^{q,S}$ are $\theta$-dependent, to compare the performance of protocols with the aid of switch and flip, we evaluate the difference of the average FI obtained through these strategies, i.e., $\langle \mathcal{F}_\theta^{q}\rangle - \langle \mathcal{F}_\theta^{q, S}\rangle=\frac{\sqrt 3}{8}(1-q)\sqrt{(1-q)(3+5q)}+\frac{1}{8}\sqrt{(5+6q-3q^2)(5-2q+5q^2)} -\frac{1}{4}\sqrt{(3+q)(3+q-4q n_2^2)} -\frac{1}{4}\sqrt{(1-q)(1-q+4qn_2^2)}$. Analyzing this particular equation is nontrivial because of the associated square roots. Hence, we opt for a numerical approach instead. One thing we know is that for $n_2 = \pm1$, NITDM always outperforms the ``switched'' one at all $q$ and $\theta$ values. Therefore, it always beats the switched method on average for all $q$ values. For a given value of $q$  we analyze the range of $n_2$ values $\pm n_2^{\min}\leq n_2\leq \pm1$ where flip strategy is better, i.e., $\langle \mathcal{F}_\theta^{q}\rangle>\langle \mathcal{F}_\theta^{q,S}\rangle$ and fit $n_2^{\min}$ with, $n_2^{\min} = n_2^0[1 - q^g]^{h^{-1}}$. Fig. \ref{fig:n2minfit}  represents the $\chi^2$-fitted \cite{Pearson1900,Cochran1952} curve of $n_2^{\min}$ with $g = 1.14147 \pm (0.48\%)$ and $h = 2.38455 \pm (0.25\%)$ as fitting parameters and the constant $n_2^0 = 0.945742$. Although the magnitude of the advantage decreases significantly as \( n_2 \) and \( q \) deviate from unity, in low-noise regimes, metrology employing time-flip exhibits superior performance compared to that using switch over a broad range of \( n_2 \). 
\begin{figure}[H]
    \centering
    \includegraphics[width=0.45\textwidth]{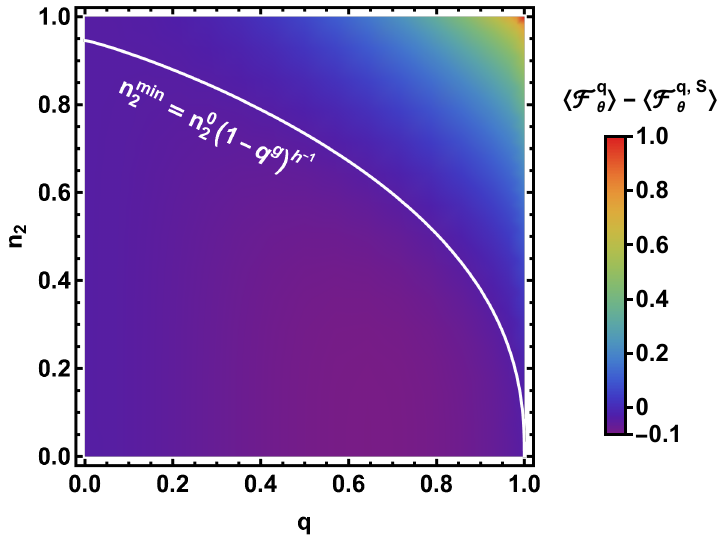}
    \caption{\textbf{Comparison between Flip and switch strategy on average. Variation of $\langle \mathcal{F}_\theta^{q}\rangle - \langle \mathcal{F}_\theta^{q, S}\rangle$ (heatmap) against $n_2$ (vertical axis) and  noise strength $q$ (horizontal axis). } The white contour is the $\chi^2$-fitted line of $n_2^{\min}$ where the difference $\langle \mathcal{F}_\theta^{q}\rangle - \langle \mathcal{F}_\theta^{q, S}\rangle=0$. Fitted curve is  $n_2^{\min} = n_2^0[1 - q^g]^{h^{-1}}$ with $g = 1.14147 \pm (0.48\%)$ and $h = 2.38455 \pm (0.25\%)$ as fitting parameters and a constant $n_2^0=0.945742$. In the regions above the curve  flip outperforms switch. All axes are dimensionless.}
    \label{fig:n2minfit}
\end{figure}

\section{$\theta$-averaged performance of multiqubit-NITDM}
\label{app:avg_multiqubit}
Similar to the case where $\theta$ is fixed, in  $\theta$-averaged scenario we define $\{\langle \mathcal{\tilde F}_\theta^q\rangle, \tilde N\}$  corresponding to which $\langle\mathcal F_\theta^q/N\rangle$ is maximized at $N=\tilde N$. Given $q=0.99$ and $0.95$, in Fig.~\ref{fig:multiqubitavg} we demonstrate the behavior of $\theta$-averaged FI with respect to the number of qubits and corresponding quantities are listed in Table~\ref{tab:multiqubitavg}.
\begin{figure}[H]
    \centering
    \includegraphics[width=0.45\textwidth]{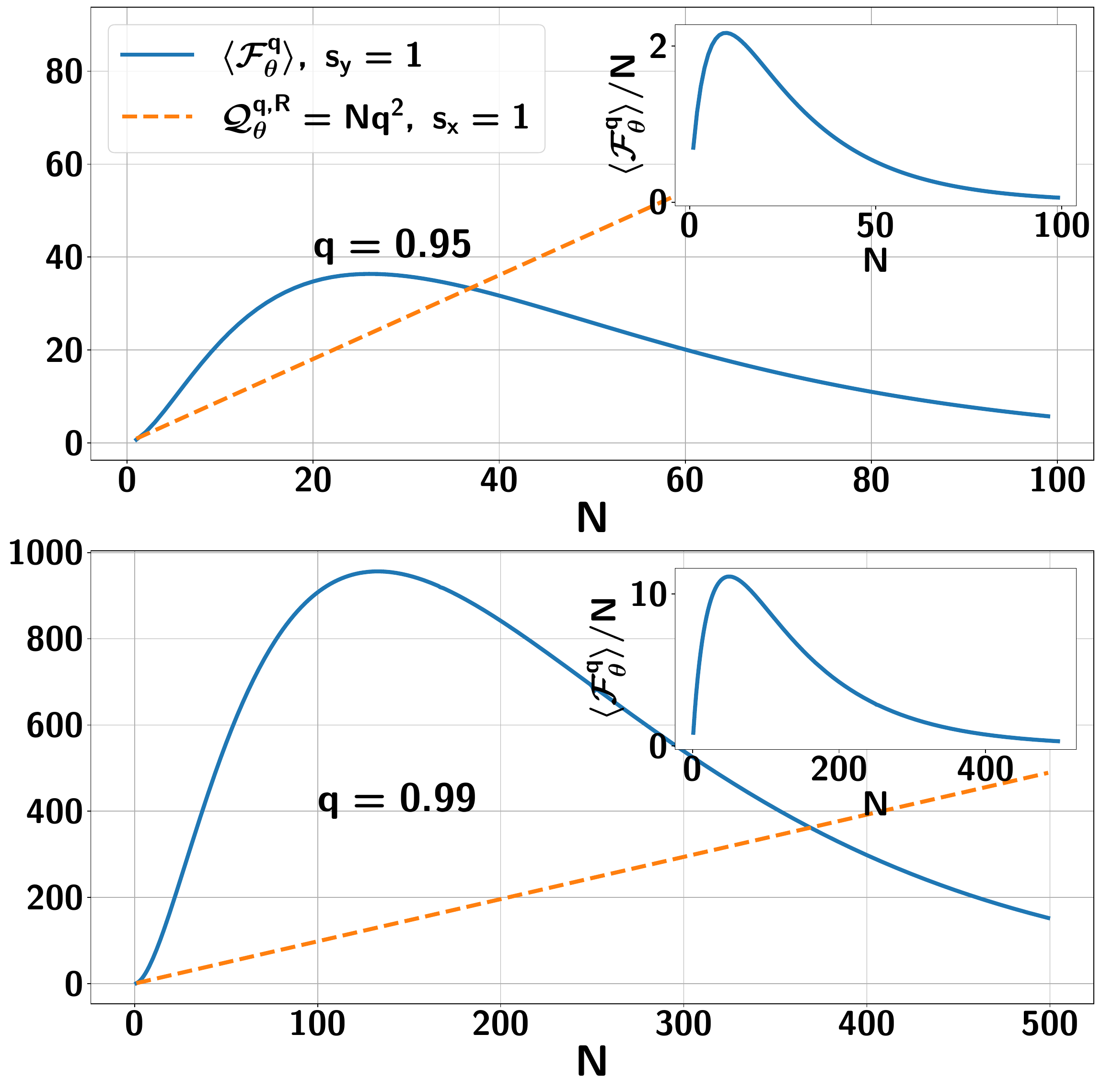}
    \caption{\textbf{ Advantage of average performance of multi qubit NITDM and regular strategy.} Average FI, $\langle \mathcal F_\theta^q\rangle$ (blue lines) with respect to number or qubits $N$ (abscissa) for different noise strength $q$. QFI in regular strategy, $\mathcal Q_\theta^{q,R}$ (orange dashed lines) against $N$. Inset: Average Fisher information per qubit, $\langle \mathcal F_\theta^q\rangle/N$ (ordinate) against number or qubits $N$ (abscissa). Here, we fix the parameter value $\theta = \pi/4$ and the encoding axis as $\hat n_2=1$.  All axes are dimensionless.}
    \label{fig:multiqubitavg}
\end{figure}
\begin{table}[h]
    \centering
    \renewcommand{\arraystretch}{1.5} 
    \begin{tabular}{|p{2cm}|p{2cm}|p{2cm}|p{2cm}|p{3cm}|}
\hline
$q$ & $\langle \mathcal{\tilde F}_\theta^q \rangle / \tilde N$  & $\tilde N$ & $\max_N{\langle\mathcal{ F}_\theta^q\rangle}$ & $N$ corresponding  to $\max_N{\langle\mathcal{ F}_\theta^q\rangle}$ \\ \hline
0.95   & 2.17   & 10  & 36.4   & 26   \\ \hline
0.99   & 11.16   &  50  & 956.4   & 133   \\ \hline
\end{tabular}
    \caption{Given two values of $q$, the maximum value of both $\theta$-averaged FI, $\max_N{\langle\mathcal{ F}_\theta^q\rangle}$ and $\theta$-averaged FI per qubit, $\langle \mathcal{\tilde F}_\theta^q \rangle / \tilde N$, and the corresponding number of qubits for each maximum value are listed from Fig.~\ref{fig:multiqubitavg}.}
    \label{tab:multiqubitavg}
\end{table}
From Table~\ref{tab:multiqubitavg}, it is evident that given $N=1000$ and $q=0.99$, it is better to use $20$ number of  $50$-qubit NITDM circuits to get $20\langle\mathcal{\tilde F}_\theta^{0.99}\rangle=11160$ where $Nq^2=980.1$. Similarly, for $q=0.95$ we can use $10$-qubit circuits $100$ times which leads to $100\langle\mathcal{\tilde F}_\theta^{0.95}\rangle=2170$, but $Nq^2=902.5$.
\end{document}